\DeclareMathOperator{\Suff}{Suff}
\DeclareMathOperator{\Pref}{Pref}
\DeclareMathOperator{\Fact}{Fact}
\DeclareMathOperator{\MT}{\it{MT}}
\DeclareMathOperator{\alp}{Alph}
\newcommand{\abs}[1]{\lvert #1 \rvert}
\renewcommand{\epsilon}{\varepsilon}
\newcommand{\ord}{\triangleleft}
\newcommand{\rrho}{\blacktriangleleft}
\begin{document}

\title{Universal Lyndon Words}

\sloppy

\date{\today}

\author{Arturo Carpi\inst{1}$^{, \star}$, Gabriele Fici\inst{2} \fnmsep \thanks{Partially supported by Italian MIUR Project PRIN~2010LYA9RH, ``Automi e Linguaggi Formali: Aspetti Matematici e Applicativi''.}, \v{S}t\v{e}p\'an Holub\inst{3}\fnmsep\thanks{Partially supported by the Czech Science Foundation grant
number 13-01832S.}, Jakub Opr\v{s}al\inst{3}$^{, \star\star}$, and Marinella~Sciortino\inst{2}$^{, \star}$}

\authorrunning{A. Carpi et al.}

\institute{
Dipartimento di Matematica e Informatica, Universit\`a di Perugia, Italy\\ \email{carpi@dmi.unipg.it}  \and
Dipartimento di Matematica e Informatica, Universit\`a di Palermo, Italy\\
\email{\{fici,mari\}@math.unipa.it} \and
Department of Algebra, Univerzita Karlova, Czech Republic \\
\email{\{holub,oprsal\}@karlin.mff.cuni.cz}
}

\maketitle

\begin{abstract}
A word $w$ over an alphabet $\Sigma$ is a Lyndon word if there exists an order defined on $\Sigma$ for which $w$ is lexicographically smaller than all of its conjugates (other than itself). We introduce and study \emph{universal Lyndon words}, which are words over an $n$-letter alphabet that have length $n!$ and such that all the conjugates are Lyndon words. 
We show that universal Lyndon words exist for every $n$ and exhibit combinatorial and structural properties of these words. We then define particular prefix codes, which we call Hamiltonian lex-codes, and show that every Hamiltonian lex-code is in bijection with the set of the shortest unrepeated prefixes of the conjugates of a universal Lyndon word. This allows us to give an algorithm for constructing all the universal Lyndon words.
\end{abstract}

\keywords{Lyndon word, Universal cycle, Universal Lyndon word, Lex-code.}

\section{Introduction}

A word is called Lyndon if it is lexicographically smaller than all of its conjugate words (other than itself). Lyndon words are an important and well studied object in Combinatorics. Recall, for example, the fact that every Lyndon word is unbordered, or the existence of a unique factorization of any word into a non-decreasing sequence of Lyndon words \cite{lot1}. The definition of Lyndon word implicitly assumes a lexicographic order. Therefore, for different orders, we typically obtain several distinct Lyndon conjugates of the same word. The motivation of this paper is to push the idea to its limits, and ask whether there is a \emph{universal Lyndon word}, that is, a word of length $n!$ over $n$ letters such that for each of its conjugates there exists an order with respect to which this conjugate is Lyndon.

Such a word resembles similar objects known in the literature as universal cycles. A \emph{universal cycle} \cite{Ch92} is a circular word containing every object of a particular type exactly once as a factor. Probably the most prominent example of universal cycles are de Bruijn cycles, which are circular words of length $2^{n}$ containing every binary word of length $n$ exactly once.

The set represented by a universal Lyndon word is the set of all total orders on $n$ letters or, equivalently, all permutations of $n$ letters. The most convenient way is to represent the order $a_1<a_2<\cdots< a_n$ by its ``shorthand encoding'', which is the word $a_1a_2\cdots a_{n-1}$. Jackson \cite{Ja93} showed  that the corresponding universal cycles exist for every $n$ and can be obtained from an Eulerian graph in a manner similar to the generation of de Bruijn cycles. Ruskey and Williams \cite{RuWi10} gave efficient algorithms for constructing shorthand universal cycles for permutations. Our paper can be seen as
a generalization of this concept. Indeed, it is easy to note that every shorthand universal cycle for permutations is a universal Lyndon word (see \cite{shorthand} for more details), but the opposite is not true---that is, there exist universal Lyndon words such that the Lyndon conjugate for some order $a_1<a_2<\cdots< a_n$ does not start with $a_1a_2\cdots a_{n-1}$.

We study the structural properties of universal Lyndon words and give combinatorial characterizations. We then develop a method for generating all the universal Lyndon words. 
This method is based on the notion of Hamiltonian lex-code, which we introduce in this paper.

\section{Notation}

Given a finite non-empty ordered set $\Sigma$ (called the {\it alphabet}), we  let $\Sigma^*$ denote the set of words over the alphabet $\Sigma$. Given a finite word $w =a_1a_2\cdots a_n$, with $n \geq 1$ and $a_i \in \Sigma,$  the length $n$ of $w$ is denoted by $\abs{w}$. The  \textit{empty word} will be denoted by $\varepsilon$ and we set $\abs{\varepsilon}=0.$ We let $\Sigma^{n}$ denote the set of words of length $n$ and by $\Sigma^+$ the set of non-empty words. For $u,v \in \Sigma^+$ we let  $\abs u_v$ denote the number of (possibly overlapping) occurrences  of $v$ in $u.$ For instance, $\abs{011100}_{00}=1$ and $\abs{011100}_{11}=2$.

Given a word $w =a_1a_2\cdots a_{n}$, $a_i\in \Sigma,$ we say  a word $v\in \Sigma^+$ is a  {\it factor} of $w$ if  $v=a_{i}a_{i+1}\cdots a_{j}$ for some integers $i$, $j$ with $1\leq i \leq j \leq n$.
We let  $\Fact(w)$ denote the set of all factors of $w$
and $\alp(w)$ the set of all factors of $w$ of length $1.$ If $i=1$ (resp.,~$j=n$), we say that the factor $v$ is a \emph{prefix} (resp.,~a \emph{suffix}) of $w$. We let $\Pref(w)$  (resp.,~$\Suff(w)$)  denote the set of prefixes (resp.,~suffixes) of the word $w$. The empty word $\epsilon$ is a factor, a prefix and a suffix of any word. A factor (resp.,~a prefix, resp.,~a suffix) of a word $w$ is \emph{proper} if it is different from $\epsilon$ and from $w$ itself.

A \emph{border} of $w$ is a proper prefix of $w$ that is also a suffix of $w$. A word is said to be \emph{unbordered} if it does not have borders.
A word $u$ is a \emph{cyclic factor} of $w$ if $u\in \Fact(ww)$ and $\abs{u}\leq \abs w$. We let $\abs {w}_{u}^{c}$ denote the number of (possibly overlapping) occurrences of $u$ as a cyclic factor of $w$. For instance, $\abs{011100}^{c}_{00}=2$. We say that a word $u$ is \emph{conjugate} to a word $v$ if there exist words $w_{1},w_{2}$ such that $u=w_{1}w_{2}$ and $v=w_{2}w_{1}$. The conjugate is \emph{proper} if both $w_{1}$ and $w_{2}$ are non-empty The conjugacy being an equivalence relation, we can define a \emph{cyclic word} as a conjugacy class of words. Note that $u$ is a cyclic factor of a word $w$ if and only if $u$ is a factor of a conjugate of $w$.

 Every total order on the alphabet $\Sigma$ induces a different lexicographic (or dictionary) order on $\Sigma^{*}$. Recall that the lexicographic order $\ord$ on $\Sigma^{*}$ induced by the order $<$ on the alphabet $\Sigma$ is defined as follows: $u \ord v$ if $u$ is a prefix of $v$ or $za$ is a prefix of $u$ and $zb$ is a prefix of $v$, with $a<b$.
We say that a word $w$ over $\Sigma$ is a \emph{Lyndon word} if there exists a total order on $\Sigma$ such that, with respect to this order $w$ is lexicographically smaller than all of its proper conjugates (or, equivalently, proper suffixes). For example, the word $w=abcabb$ is a Lyndon word, because for the order $a<c<b$ it is the smallest word in its conjugacy class.
Note that a Lyndon word must be primitive (i.e., it cannot be written as a concatenation of two or more copies of a shorter word), and therefore its conjugates are all distinct.

A set of words $X\subset \Sigma^{+}$ is a \emph{code} if for every $x_{1},x_{2},\ldots, x_{h},x'_{1},x'_{2},\ldots, x'_{k}\in X$, if $x_{1}x_{2}\cdots x_{h}=x'_{1}x'_{2}\cdots x'_{k}$, then $h=k$ and $x_{i}=x'_{i}$ for every $1\leq i\leq h$.
For example, $X=\{ab,abb\}$ is a code. Every set $X\subset \Sigma^{+}$ with the property that no word in $X$ is a prefix of another word in $X$ is a code, and is called a \emph{prefix code}.

A \emph{directed graph} (or \emph{digraph}) is a pair $G=(V,E)$, where $V$ is a set, whose elements are called \emph{vertices}, and $E$ is a binary relation on $V$ (i.e., a set of ordered pairs of elements of $V$) whose elements are called \emph{edges}.
The \emph{indegree} (resp.,~\emph{outdegree}) of a vertex $v$ in a digraph $G$ is the number of edges incoming to $v$ (resp.,~outgoing from $v$).
A \emph{walk} in a digraph $G$ is a non-empty alternating sequence $v_{0}e_{0}v_{1}e_{1}\cdots e_{k-1}v_{k}$ of vertices and edges of $G$ such that $e_{i}=(v_{i},v_{i+1})$ for every $i<k$. If $v_{0}=v_{k}$ the walk is \emph{closed}.
A closed walk in a digraph $G$ is an \emph{Eulerian cycle} if it traverses every edge of $G$ exactly once. A digraph is \emph{Eulerian} if it admits an Eulerian cycle. A fundamental property of graphs is that a connected digraph is Eulerian if and only if the indegree of each vertex is equal to its outdegree.
A closed walk in a digraph $G$ is a \emph{Hamiltonian cycle} if it contains every vertex of $G$ exactly once. A digraph is \emph{Hamiltonian} if it admits a Hamiltonian cycle.

In the rest of the paper, we let $\Sigma_{n}$ denote the alphabet $\{1,2,\ldots,n\}$, $n>0$.

\section{Universal Lyndon Words}\label{sec:ulw}

\begin{definition}
A universal Lyndon word (ULW) of degree $n$ is a word over $\Sigma_{n}$ that has  length $n!$ and such that all its conjugates are Lyndon words.
\end{definition}

\begin{remark}
Since there exist $n!$ possible orders on $\Sigma_{n}$, a universal Lyndon word $w$ of degree $n$ has the property that for every order on $\Sigma_{n}$, there is exactly one conjugate of $w$ that is Lyndon with respect to this order; on the other hand, from the definition it follows that a conjugate of a universal Lyndon word cannot be Lyndon for more than one order.
\end{remark}

We consider universal Lyndon words up to rotation, i.e., as cyclic words.

\begin{example}
The only universal Lyndon word of degree $1$ is $1$, and the only universal Lyndon word of degree $2$ is $12$. There are three universal Lyndon words of degree $3$, namely  $212313$, $323121$ and $131232$. Note that these words are pairwise isomorphic (i.e., one can be obtained from another by renaming letters). There are $492$ universal Lyndon words of degree $4$. There are $41$ if we consider them up to isomorphism, and are presented in Tables \ref{tab:jackson4} and \ref{tab:nonjackson4}.
\end{example}

\begin{remark}
It is worth noticing that a universal Lyndon word cannot contain a square (i.e., a concatenation of two copies of the same word) as a cyclic factor. That is, a universal Lyndon word is cyclically square-free. Indeed, if $uu$ is a factor of $w$, then there is a conjugate of $w$ that has $u$ as a border, and it is easily shown that every Lyndon word must be unbordered, and therefore no conjugate of a universal Lyndon word can have a border.
\end{remark}

The following proposition gives a sufficient condition for a word being a ULW.

\begin{proposition}\label{prop:1}
Let $n\geq 2$, and $w$ be a word over $\Sigma_{n}$ such that every permutation of $n-1$ elements of $\Sigma_{n}$ appears as a cyclic factor in $w$ exactly once. Then $w$ is a universal Lyndon word.
\end{proposition}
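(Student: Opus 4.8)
The plan is to take an arbitrary conjugate $v=v_{1}v_{2}\cdots v_{N}$ of $w$ and produce an explicit total order on $\Sigma_{n}$ for which $v$ is Lyndon; since $v$ ranges over all conjugates, this shows $w$ is a ULW. First I would record that the hypothesis, read so that the length-$(n-1)$ cyclic factors of $w$ are precisely the $n!$ permutations of $n-1$ letters of $\Sigma_{n}$ (which in particular fixes $\abs w=N=n!$), forces \emph{every} cyclic factor of $w$ of length $n-1$, hence every length-$(n-1)$ window of every conjugate $v$, to consist of $n-1$ pairwise distinct letters. So the prefix $p=v_{1}\cdots v_{n-1}$ of $v$ uses $n-1$ distinct letters; letting $x\in\Sigma_{n}$ be the unique letter missing from $p$, I fix the order $v_{1}<v_{2}<\cdots <v_{n-1}<x$ and claim that, with respect to it, $v$ is strictly smaller than each of its proper conjugates $v^{(k)}=v_{k}v_{k+1}\cdots v_{N}v_{1}\cdots v_{k-1}$ ($2\le k\le N$), and hence is Lyndon.

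To prove the claim, fix $k$ with $2\le k\le N$ and let $j\ge 1$ be least with $v_{j}\neq v_{k+j-1}$, indices read mod $N$. Such a $j$ exists and satisfies $j\le n-1$: were $v_{i}=v_{k+i-1}$ to hold for all $i\le n-1$, then $p=v_{1}\cdots v_{n-1}=v_{k}v_{k+1}\cdots v_{k+n-2}$ would place the permutation $p$ at the two distinct cyclic positions $1$ and $k$, contradicting that $p$ occurs exactly once. Now $v_{1}\cdots v_{j}$ is a prefix of the permutation $p$, so $v_{1},\dots ,v_{j}$ are distinct and, since $j\le n-1$, they are precisely the $j$ least letters in the chosen order. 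On the other hand $v_{k}v_{k+1}\cdots v_{k+j-1}$ is a length-$j$ window of $w$, so its letters are distinct as well; since $v_{k+i-1}=v_{i}$ for $i<j$, that letter set equals $\{v_{1},\dots ,v_{j-1},v_{k+j-1}\}$, whence $v_{k+j-1}\notin\{v_{1},\dots ,v_{j-1}\}$, and combining with $v_{k+j-1}\neq v_{j}$ we get $v_{k+j-1}\notin\{v_{1},\dots ,v_{j}\}$. Hence $v_{k+j-1}$ is larger than each of $v_{1},\dots ,v_{j}$ in the chosen order, in particular $v_{j}<v_{k+j-1}$. Since $v$ and $v^{(k)}$ agree on their first $j-1$ letters and differ at position $j$ with $v_{j}<v_{k+j-1}$, this gives $v\ord v^{(k)}$, completing the claim.

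The heart of the argument is the interplay between the two halves of the hypothesis: ``every permutation occurs'' makes all short cyclic factors have distinct letters, which both makes the order $v_{1}<\cdots <v_{n-1}<x$ well defined and guarantees that the $j$ smallest letters sit at the very front of $v$, while ``occurs exactly once'' is exactly what bounds the first mismatch position $j$ by $n-1$ — without that bound the displayed order need not make $v$ Lyndon, so this is the step I expect to be the crux. The only other things to watch are the bookkeeping with cyclic indices (passing between conjugates of $v$ and length-$(n-1)$ windows of $ww$) and the small point that one should take $\abs w=n!$ as part of the hypothesis, since a word can have every permutation of $n-1$ letters as a cyclic factor exactly once and still be longer than $n!$ (e.g.\ by also containing a length-$(n-1)$ cyclic factor with a repeated letter).
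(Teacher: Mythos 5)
Your proof is correct, and it is organized in a genuinely different (and more self-contained) way than the paper's. The paper argues per order: for each of the $n!$ total orders $a_1<\cdots<a_n$ there is exactly one conjugate beginning with the shorthand $a_1\cdots a_{n-1}$, it \emph{asserts} that this conjugate is Lyndon for that order, and then concludes by counting $n!$ distinct Lyndon conjugates among the $n!$ conjugates. You argue per conjugate: you build the order $v_1<\cdots<v_{n-1}<x$ from the prefix of an arbitrary conjugate $v$ and verify Lyndonness directly via a first-mismatch argument, using ``every window of length at most $n-1$ has distinct letters'' to force the mismatch letter to be larger than $v_j$, and ``the permutation $p$ occurs exactly once'' to force the mismatch to occur within the first $n-1$ positions. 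This buys two things: you never need the counting/distinctness step, and you actually prove the lexicographic claim that the paper leaves implicit (your mismatch argument is precisely the missing justification for ``this conjugate is Lyndon with respect to this order''). Your closing caveat is also well taken: read literally, the hypothesis does not force $\abs{w}=n!$ (for $n=3$ the cyclic word $1123213$ contains each $2$-letter permutation exactly once but has length $7$), so one must read it, as both you and the paper implicitly do, as saying that the length-$(n-1)$ cyclic factors of $w$ are exactly the $n!$ permutations; under that reading both proofs are sound.
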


\begin{proof}
 Suppose that every permutation of $n-1$ elements of $\Sigma_{n}$ appears as a cyclic factor in $w$ exactly once.  Since there are $n!$ such words, this implies that $w$ has length $n!$. Now, for any order $a_{1}<a_{2}< \ldots <a_{n-1}< a_{n}$ over $\Sigma_{n}$, there is exactly one conjugate of $w$ beginning with $a_{1}a_{2}\cdots a_{n-1}$, and this conjugate is Lyndon with respect to this order. So $w$ has exactly $n!$ distinct Lyndon conjugates and therefore is a universal Lyndon word.
\qed \end{proof}

\begin{remark}\label{rem:2}
One might wonder whether it is sufficient to suppose that \emph{each} of $w$'s factors of length ${n-1}$ appears exactly once in the word $w$ to guarantee that $w$ is a ULW. This is not the case. For example, let $n=4$; the word $w=123412431324134214231432$ has $n!$ distinct factors of length $n-1$ but is not a universal Lyndon word, since its conjugate $314321234124313241342142$ is not Lyndon for any order (in fact this is a consequence of the fact that the conjugate $313241342142314321234124$ is Lyndon both for the orders $3<1<2<4$ and  $3<1<4<2$).
\end{remark}

We now use the result of Proposition \ref{prop:1} to show that there exist universal Lyndon words for each degree.

Given an integer $n>2$, the \emph{Jackson graph of degree $n$}, denoted $J(n)$, is a directed graph in which the nodes are the words over $\Sigma_{n}$ that are permutations of $n-2$ letters, and there is an edge from node $u$ to node $v$ if and only if the suffix of length $n-3$ of $u$ is equal to the prefix of length $n-3$ of $v$ and the first letter of $u$ is different from the last letter of $v$. The label of such an edge is set to the first letter of $u$.
In Fig. \ref{fig:Jackson}, the Jackson graph $J(4)$ is depicted.

\begin{figure}
\begin{center}
\[
\begin{tikzpicture}[ver/.style={draw,circle, inner sep=1.5pt}, >={latex},scale=0.8]
\node[ver] (23) at (0,1.4) {$23$};
\node[ver] (41) at (90:4) {$41$};
\node[ver] (14) at (0,-1.4) {$14$};
\node[ver] (32) at (-90:4) {$32$};
\node[ver] (12) at (135:4) {$12$};
\node[ver] (43) at (-135:4) {$43$};
\node[ver] (34) at (45:4) {$34$};
\node[ver] (21) at (-45:4) {$21$};
\node[ver] (31) at (-2,0) {$31$};
\node[ver] (42) at (2,0) {$42$};
\node[ver] (24) at (-4.5,0) {$24$};
\node[ver] (13) at (0:4.5) {$13$};
\foreach \x/\y/\z in {12/24/1,12/23/1,21/13/2,21/14/2,23/31/2,23/34/2,31/12/3,31/14/3,32/24/3,32/21/3,42/21/4,42/23/4,13/32/1,13/34/1,24/41/2,24/43/2,34/41/3,34/42/3,41/13/4,41/12/4,43/31/4,43/32/4,14/42/1,14/43/1}
{\draw[->] (\x)--(\y)node[midway,fill=white,inner sep=1pt]{\z};}
\end{tikzpicture}
\]
\end{center}
\caption{\label{fig:Jackson}The Jackson graph $J(4)$ of degree $4$. Every Eulerian cycle of $J(4)$ is a universal Lyndon word.}
\end{figure}
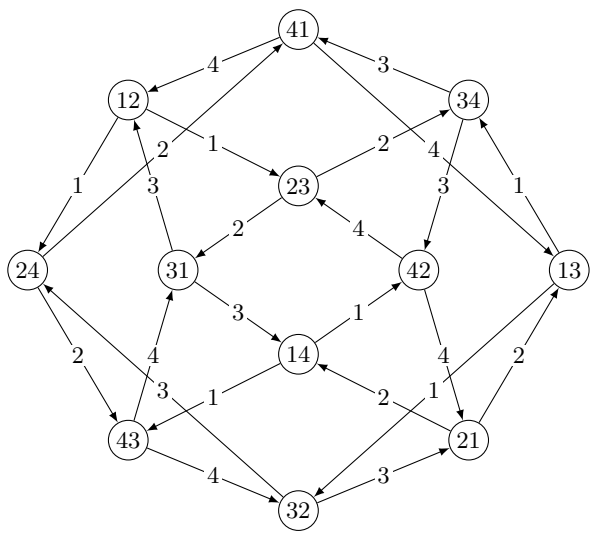

\begin{proposition}
There exist universal Lyndon words of degree $n$ for every $n>0$.
\end{proposition}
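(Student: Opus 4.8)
The plan is to use Proposition~\ref{prop:1}: it suffices to exhibit, for every $n>0$, a word over $\Sigma_n$ in which each of the $n!$ permutations of $n-1$ distinct letters of $\Sigma_n$ occurs exactly once as a cyclic factor. The cases $n=1$ and $n=2$ are immediate (the words $1$ and $12$), and $n=3$ is covered by the examples already given; for $n>2$ I would produce such a word as the sequence of edge-labels read along an Eulerian cycle of the Jackson graph $J(n)$, so the crux is to prove that $J(n)$ is Eulerian.

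First I would record the correspondence between walks in $J(n)$ and words over $\Sigma_n$. A vertex is a permutation $a_1\cdots a_{n-2}$ of $n-2$ distinct letters, and an edge from $a_1\cdots a_{n-2}$ to $b_1\cdots b_{n-2}$ requires $a_2\cdots a_{n-2}=b_1\cdots b_{n-3}$ and $a_1\neq b_{n-2}$; its label is $a_1$. Thus a closed walk of length $n!$ spells, cyclically, a word $w$ in which the length-$(n-1)$ window at each step is the permutation $a_1\cdots a_{n-2}b_{n-2}$ of $n-1$ distinct letters obtained by concatenating the source vertex with the last letter of the target vertex; consecutive edges of the walk produce overlapping windows that share their last $n-2$ letters. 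An Eulerian cycle therefore realizes every permutation of $n-1$ letters exactly once as a cyclic factor of $w$, and by Proposition~\ref{prop:1}, $w$ is a universal Lyndon word of degree $n$. (One should also check that $J(n)$ has exactly $n!$ edges, which matches the count $n\cdot(n-1)\cdots 2$ of ways to choose $a_1\cdots a_{n-2}$ and then a last letter $b_{n-2}\notin\{a_1,\dots,a_{n-2}\}$ distinct from $a_1$ — careful: the last letter must avoid the $n-2$ letters already used \emph{and} be $\neq a_1$, but $a_1$ is already among the used letters, so there are exactly $n-(n-2)=2$ choices, giving $n!/1$\ldots I will verify the exact count in the write-up, adjusting the description of $J(n)$'s edge set if needed so that the edge count is precisely $n!$).

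Next I would show $J(n)$ is Eulerian by checking that it is connected and balanced. For the in/out-degree balance, fix a vertex $a_1\cdots a_{n-2}$: its out-edges correspond to the legal choices of a new last letter $b_{n-2}$ (a letter not in $\{a_1,\dots,a_{n-2}\}$, subject to the $\neq a_1$ restriction), and its in-edges correspond to legal choices of a letter $c$ prepended to form a source vertex $c\,a_1\cdots a_{n-3}$ with the constraint $c\neq a_{n-2}$; a symmetric counting argument shows both numbers equal the same value ($2$ under the reading above), so the digraph is balanced. For connectivity I would argue that any vertex can be transformed into any other by a sequence of admissible ``shift'' moves — delete the first letter, append an admissible new last letter — using the fact that with $n\geq 3$ there is always enough freedom (at least two admissible continuations) to route around the forbidden transitions; a clean way is to first reach a canonical vertex from an arbitrary one and invoke reversibility. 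Then the standard theorem quoted in the Notation section (a connected balanced digraph is Eulerian) gives the Eulerian cycle, completing the proof.

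The main obstacle I anticipate is the connectivity argument for $J(n)$: the $\neq$ restrictions on both endpoints of an edge mean one cannot perform an arbitrary shift, so I must show these restrictions never disconnect the graph. I expect this to be a short but slightly delicate induction or direct routing argument exploiting that each vertex has (at least) two admissible outgoing shifts, so a blocked move always has an alternative. The degree-balance and edge-count computations are routine; the reading-off of permutations from an Eulerian cycle is essentially the de Bruijn-style argument and is straightforward once the vertex/edge labelling is set up carefully.
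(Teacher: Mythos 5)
Your proposal is correct and takes essentially the same route as the paper: an Eulerian cycle of the Jackson graph $J(n)$ (which is balanced with in- and out-degree $2$, hence has exactly $n!$ edges in bijection with the permutations of $n-1$ letters) spells a word in which each such permutation occurs exactly once cyclically, and Proposition~\ref{prop:1} concludes. The connectivity of $J(n)$ that you flag as the delicate point is simply asserted ``by construction'' in the paper, so your outline is, if anything, slightly more explicit about what needs checking.
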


\begin{proof}
We can suppose $n>2$. Take the Jackson graph $J(n)$. By construction, this graph is connected and the indegree and outdegree of each vertex are both equal to $2$. Therefore, it contains an Eulerian cycle. Let $w$ denote the word obtained by concatenating the labels of such an Eulerian cycle. Note that every word that is the permutation of $n-2$ letters appears as a cyclic factor in $w$ exactly twice and the two occurrences are followed by the two letters that do not appear in the factor. By Proposition \ref{prop:1}, $w$ is then a universal Lyndon word of degree $n$.
\qed \end{proof}

A universal Lyndon word that is an Eulerian cycle of a Jackson graph is called a \emph{universal cycle} \cite{Ja93}, or \emph{shorthand universal cycle for permutations} \cite{RuWi10}, but in this paper we will call it a \emph{universal Lyndon word of Jackson type}, or simply a \emph{Jackson universal Lyndon word}.

The Jackson universal Lyndon words of degree $4$ are presented in Table \ref{tab:jackson4} (the list contains only pairwise non-isomorphic words, in their representation starting with $1231$).

 \begin{table}[htb]
\begin{center}
  \setlength{\extrarowheight}{2pt}

\begin{tabular}{| >{\centering\arraybackslash}m{5cm} | >{\centering\arraybackslash}m{5cm} |}

    \hline
		& \\[-0.7em]
 $123124132431432142342134$ &    $123124132134214324314234$        \\
 $123124314214321324134234$ &    $123124134213243214314234$        \\
 $123124132431421432134234$  &    $123124314234213214324134$         \\
 $123124314213214324134234$  &    $123124321431423421324134$         \\
 $123124132431423421432134$  &    $123124134213243143214234$       \\
 $123124314214324132134234$  &    $123124132431432134214234$  \\
 $123124132432143142342134$  &    $123124321342143142341324$         \\
 $123124132431432142134234$  &    $123124132143243142134234$          \\
 $123124314234132134214324$  &    $123124132432143142134234$         \\
 $123124314234134214321324$  &    $123124134214321324314234$         \\[-0.7em]
& \\
 \hline
  \end{tabular}\vspace{4mm}
\end{center}
 \caption{The 20 Jackson universal Lyndon words of degree $4$, up to isomorphisms.\label{tab:jackson4}}
\end{table}

However, there are universal Lyndon words that are not of Jackson type. In fact,
the converse of Proposition \ref{prop:1} is not true. For instance, $w=123431242314132421343214$ is a universal Lyndon word of degree 4 but it does not contain any of $142$, $143$, $241$, $243$, $341$, $342$ as a factor.

\section{Order-defining Words}

In this section, we give combinatorial results on the structure of universal Lyndon words.

Let
$w=a_1a_2\cdots a_{n!}$ 
be a universal Lyndon word of degree $n$.
Let $w_i$ denote the conjugate of $w$ starting at position $i$, that is, \[w_i=a_ia_{i+1}\cdots a_{n!}a_1a_2\cdots a_{i-1}\,.\]

\begin{definition}
We say that a partial order $\ord$ on $\Sigma_{n}$ is a \emph{partial alphabet order  with respect to $I\subseteq \Sigma_{n}$} if $\ord$ is a total order on $I$, $i\ord j$ for each $i\in I$ and $j\in \Sigma_{n}\setminus I$, and all $j,k\in \Sigma_{n}\setminus I$ are incomparable. The \emph{size} of $\ord$ is set to $\abs I$.
\end{definition}

Note that a partial alphabet order of size $n-1$ is a total order on $\Sigma_{n}$.

Every word $u\in \Sigma_{n}^+$ defines a partial alphabet order $\ord_u$ with respect to $\alp(u)$, defined as follows: $i\ord_u j$ if and only if the first occurrence of $i$ in $u$ precedes the first occurrence of $j$ in $u$.

The following proposition shows that in a universal Lyndon word, every conjugate is Lyndon with respect to the order it defines. This is an important structural property of universal Lyndon words, which is not true in general. Take, for example, the word $w=123122$. It is Lyndon with respect to the order $1<3<2$, but it is not Lyndon with respect to the order it defines, $1<2<3$.

We let $\ord_i$ denote the order defined by $w_i$ and by $\rrho_i$ the order with respect to which $w_i$ is Lyndon.

\begin{theorem}\label{theor:stepan}
Let $w$ be a word of length $n!$ over $\Sigma_{n}$. Then $w$ is a ULW if and only if every conjugate of $w$ is Lyndon with respect to the order it defines. That is, $\ord_i=\rrho_i$ for every $i$.
\end{theorem}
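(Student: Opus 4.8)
The plan is to prove both implications. The ``if'' direction is essentially immediate: if every conjugate $w_i$ is Lyndon with respect to the order $\ord_i$ it defines, then in particular every conjugate is Lyndon with respect to \emph{some} order, so $w$ is a ULW by definition. The entire content of the theorem therefore lies in the ``only if'' direction, which asserts that for a ULW $w$ the coincidence $\ord_i = \rrho_i$ holds for every $i$.

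\medskip

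For the forward direction I would fix a conjugate $u = w_i$, write $\rrho = \rrho_i$ for the order making $u$ Lyndon and $\ord = \ord_i$ for the order $u$ defines, and argue that $\rrho$ refines $\ord$, i.e.\ $j \ord k$ implies $j \rrho k$; since $w$ is a ULW its conjugates use all $n$ letters (each permutation of $n-1$ letters that is forced to appear makes this plausible, but more directly: a ULW is cyclically square-free and has length $n!$, so every conjugate has length $n! \geq n$ and, crucially, must contain all $n$ letters — I would need to confirm this, perhaps via the remark that each order has a unique Lyndon conjugate), so $\ord$ is already a total order and refinement forces $\rrho = \ord$. To show the refinement, suppose $j \ord k$, meaning the first occurrence of $j$ in $u$ strictly precedes the first occurrence of $k$. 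Write $u = p\, j\, q$ where $p$ is the (possibly empty) prefix before the first $j$ and $p$ contains neither $j$ nor $k$. The key observation is to exhibit a proper suffix of $u$ whose comparison with $u$ under $\rrho$ pins down the relative order of $j$ and $k$: take the suffix $s$ of $u$ beginning at the first occurrence of $k$. Then $u$ and $s$ should share the common prefix $p$ (this needs that the letters of $p$ reappear in $u$ in the same pattern right after position of $k$ — here is where the combinatorial structure of a square-free word of length $n!$ over $n$ letters must be used, likely exploiting that $u$ starts with $p\,j$ and that $s$ starts with $p'\,$ for some rearrangement), and immediately after this shared part $u$ has the letter $j$ while $s$ has some other letter; since $u \rrho s$, that forces $j \rrho (\text{that letter})$, and chaining such inequalities yields $j \rrho k$.

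\medskip

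The main obstacle I anticipate is precisely the step ``$u$ and $s$ share a long enough common prefix so that the first point of difference reveals $j \rrho k$.'' Getting this cleanly will require the structural facts peculiar to ULWs — that $w$ is cyclically square-free, has exactly length $n!$, and (from Proposition~\ref{prop:1} and the surrounding discussion) behaves almost like a de Bruijn-type object on permutations. A robust way to organize it: for each pair $j \ord k$ with $j$ occurring first, consider the shortest prefix $v$ of $u$ that contains $k$; then $v = v'\, k$ where $v'$ already contains $j$. Consider the conjugate $u'$ of $w$ starting at the occurrence of $k$ that ends $v$; both $u$ and $u'$ are conjugates of the ULW $w$, and comparing them through the Lyndon property of $u$ (namely $u \rrho u'$, since $u'$ is a proper suffix of $u$) together with the position bookkeeping should yield $j \rrho k$. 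I would isolate this as a lemma: ``for a ULW, if $j$ occurs before $k$ in the conjugate $u$, then $j \rrho_u k$,'' prove it by the prefix-comparison argument above, and then conclude $\rrho_u = \ord_u$ because both are total orders on $\Sigma_n$ and one refines the other. The remark in the text about $w = 123122$ not being a ULW is a useful sanity check that the hypothesis ``$|w| = n!$ and $w$ is a ULW'' (not just ``$w$ is Lyndon for some order'') is genuinely needed.
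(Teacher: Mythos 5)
Your ``if'' direction is fine and matches the paper. In the ``only if'' direction, however, there is a genuine gap at exactly the step you flagged. Comparing the conjugate $u$ (Lyndon for $\rrho$) with the conjugate $u'$ starting at the first occurrence of $k$ only tells you something about the \emph{first point of difference} between $u$ and $u'$; since the prefix of $u$ before the first $k$ contains no $k$, that first difference is at position one, and $u\,\rrho\,u'$ yields only $c\,\rrho\,k$ where $c$ is the \emph{first letter} of $u$ --- not $j\,\rrho\,k$ for an arbitrary letter $j$ occurring before $k$. Your stronger claim, that $u$ and the suffix $s$ starting at the first occurrence of $k$ ``share the common prefix $p$,'' is false in general: in the degree-$3$ ULW take $u=212313$, $j=1$, $k=3$; then $p=2$, while the conjugate starting at the first $3$ is $313212$, which has no common prefix with $u$ at all. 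No amount of ``chaining'' of such first-letter inequalities within the single conjugate $u$ closes the gap, because the relation you need involves letters that are not at a position of first difference in any of these comparisons.

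The idea your proposal is missing is the one the paper's proof turns on: use the counting fact that a ULW has $n!$ distinct Lyndon conjugates, so \emph{every} total order on $\Sigma_n$ is the Lyndon order of exactly one conjugate. Assuming $\ord_j\neq\rrho_j$, pick the conjugate $w_k$ with $\rrho_k=\ord_j$ and let $z$ be the longest common prefix of $w_j$ and $w_k$, with $za$ a prefix of $w_j$ and $zb$ a prefix of $w_k$. The two Lyndon properties give $a\,\rrho_j\,b$ and $b\,\rrho_k\,a$, hence $b\,\ord_j\,a$, which forces $b$ to occur inside $z$, so that $bua$ and $bub$ are both cyclic factors of $w$ for some $u$. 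Then a \emph{third} conjugate $w_\ell$ beginning with $bua$ yields the contradiction: its first letter $b$ is $\rrho_\ell$-minimal, so $bub\,\rrho_\ell\,bua$, contradicting that $w_\ell$ is Lyndon for $\rrho_\ell$. This three-conjugate argument (and the bijection between orders and conjugates that makes $w_k$ exist) is what your single-conjugate suffix comparison cannot replace; your remark about needing every conjugate to contain all $n$ letters is correct but is only a preliminary point, not the core of the proof.
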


\begin{proof}
If every conjugate of $w$ is Lyndon, then $w$ is ULW by definition. So we only have to prove the ``only if'' part of the statement.

Suppose that $\ord_j\neq \rrho_j$ for some $j$, and let $k$ be such that $\rrho_k=\ord_j$. Let $z$ be the longest common prefix of $w_j$ and $w_k$. Then $za$ is a prefix of $w_j$ and $zb$ a prefix of $w_k$, where $a\neq b$ are letters. We have $a\rrho_j b$ and $b\rrho_k a$. Therefore, also $b\ord_j a$, which implies that there exists $u\in \Sigma_{n}^{*}$ such that $bua$ is a suffix of $za$ and $bub$ is a suffix of $zb$. Let $w_\ell$ be the conjugate starting with $bua$. Obviously, $b\rrho_\ell a$, since $b$ is the first letter of $w_\ell$. But then we have that $bub\rrho_\ell bua$, and therefore $w_\ell$ has a conjugate smaller than itself for the order $\rrho_\ell$, a contradiction.
\qed \end{proof}

\begin{proposition}\label{minimal2}
Let $w$ be a universal Lyndon word, and $u$ a cyclic factor of $w$. Then for every conjugate $w_{i}$ of $w$, we have that $u$ is a prefix of $w_i$ if and only if $\ord_u\subseteq \rrho_i$.
\end{proposition}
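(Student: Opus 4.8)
The plan is to prove both implications directly from the definitions, using Theorem~\ref{theor:stepan} to identify $\rrho_i$ with $\ord_i$, which lets me reason entirely about the orders \emph{defined} by conjugates.

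For the ``only if'' direction, suppose $u$ is a prefix of $w_i$. Since $u\in\Fact(w_i)$ and the first occurrence in $w_i$ of each letter of $\alp(u)$ already happens inside this prefix $u$, the relative order of the letters of $\alp(u)$ as seen by $w_i$ agrees with their order as seen by $u$ alone; moreover every letter of $\alp(u)$ occurs in $w_i$ before any letter not in $\alp(u)$ that first appears later, but in any case $\ord_u$ only constrains pairs inside $\alp(u)$, so $i\ord_u j$ implies $i\ord_i j$. Hence $\ord_u\subseteq\ord_i=\rrho_i$.

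For the ``if'' direction, assume $\ord_u\subseteq\rrho_i=\ord_i$. Let $m=\abs u$ and write $u=b_1b_2\cdots b_m$. I will show that $w_i$ starts with $u$. Let $w_\ell$ be the (unique) conjugate of $w$ that has $u$ as a prefix; by the ``only if'' direction just proved, $\ord_u\subseteq\ord_\ell$. The goal is to show $\ell=i$. Suppose not. Then $w_i$ and $w_\ell$ are distinct conjugates; let $z$ be their longest common prefix, so $za$ is a prefix of $w_i$ and $zb$ a prefix of $w_\ell$ with $a\ne b$. Since $w$ is a ULW, $w_i$ is Lyndon for $\ord_i$ and so $w_i\ord_i w_\ell$, giving $a\ord_i b$; similarly $b\ord_\ell a$. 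Now I distinguish where the split position $\abs z$ sits relative to the prefix $u$ of $w_\ell$. If $\abs z+1\le m$, then $z$ is a proper prefix of $u$ and $za$ is a prefix of $u$ too (since $za$ is a prefix of $w_i$ of length $\le m$ — wait, $za$ is a prefix of $w_i$, not of $w_\ell$), so more carefully: $zb$ is a prefix of $u$, hence $b\in\alp(u)$; if also $a\in\alp(z)\subseteq\alp(u)$ then $a\ord_u b$ (because in $u$ the first occurrence of $a$ is inside $z$, before the occurrence of $b$ at position $\abs z+1$), so $a\ord_\ell b$ by $\ord_u\subseteq\ord_\ell$, contradicting $b\ord_\ell a$; if $a\notin\alp(z)$ then $a$ first occurs in $w_i$ at position $\abs z+1$, while in $w_\ell$ the letter $b$ occurs at position $\abs z+1$ inside $u$, and one argues via the ULW / square-freeness structure that this is impossible. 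The remaining case $\abs z\ge m$ means $u$ is a prefix of $z$, hence of both $w_i$ and $w_\ell$; but $w_\ell$ was the \emph{unique} conjugate with prefix $u$ (each cyclic factor occurs the right number of times — here $u$ of length $\le n!$; if $u$ occurred as a prefix of two conjugates $w$ would have a repeated cyclic factor, hence by the remark a square, impossible), so $i=\ell$, done.

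The main obstacle is the case analysis in the ``if'' direction, specifically handling the sub-case where the mismatch letter $a$ does not yet appear in the common prefix $z$: there I cannot appeal to $\ord_u$ directly and must instead invoke the finer structure of ULWs — cyclic square-freeness and the uniqueness of occurrences of cyclic factors of the relevant length — to rule out the configuration. I expect the cleanest route is to mimic the suffix-exchange argument in the proof of Theorem~\ref{theor:stepan}: from $a\ord_i b$ and $b\ord_\ell a$ produce a conjugate starting with a word $b\,v\,a$ whose rotation $b\,v\,b$ would be $\rrho$-smaller, yielding the contradiction uniformly and absorbing both sub-cases at once.
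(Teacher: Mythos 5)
There is a genuine gap in your ``if'' direction, and you concede it yourself: the sub-case where the mismatch letter $a$ does not occur in the common prefix $z$ is never resolved, only deferred to an unspecified appeal to ``ULW / square-freeness structure'' or a hoped-for exchange argument in the style of Theorem~\ref{theor:stepan}. Neither is needed, and neither is carried out. The missing observation is that the mismatch between $w_i$ and $w_\ell$ necessarily happens \emph{inside} the prefix $u$ of $w_\ell$ (if $\abs z\ge\abs u$ then $u$ is a prefix of $z$, hence of $w_i$, and you are already done -- note your actual goal is ``$u$ is a prefix of $w_i$'', not ``$\ell=i$''). So $zb$ is a prefix of $u$ and hence $b\in\alp(u)$. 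Since $\ord_u$ is a partial alphabet order with respect to $\alp(u)$, the letter $b$ is $\ord_u$-comparable with $a$ whether or not $a\in\alp(u)$: if $a\ord_u b$, then $\ord_u\subseteq\rrho_\ell$ gives $a\rrho_\ell b$, contradicting $b\rrho_\ell a$; if $b\ord_u a$ (which in particular covers $a\notin\alp(u)$, by the cross-pair clause of the definition), then the hypothesis $\ord_u\subseteq\rrho_i$ gives $b\rrho_i a$, contradicting $a\rrho_i b$. This short dichotomy is essentially the paper's own three-line argument (there phrased by comparing $u$ itself with the conjugate in question), and it absorbs both of your sub-cases at once, with no counting, square-freeness, or suffix-exchange.

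Two further flaws in the write-up, both repairable but real. In the sub-case $a\in\alp(z)$ you justify $a\ord_u b$ by saying the first occurrence of $a$ lies in $z$, before the occurrence of $b$ at position $\abs z+1$; this is wrong as stated, since $b$ may already occur in $z$ before the first $a$, in which case $b\ord_u a$ and the contradiction must come from $\ord_u\subseteq\rrho_i$, which you never invoke in that sub-case. Second, the conjugate with prefix $u$ is not unique in general: by Theorem~\ref{theor:number} the factor $u$ has $(n-\abs{\alp(u)})!$ cyclic occurrences, and a repeated cyclic factor certainly does not force a square (every letter repeats $(n-1)!$ times), so your parenthetical ``repeated cyclic factor, hence a square'' inference is false. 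Fortunately uniqueness is not needed anywhere once the goal is stated correctly as above.
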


\begin{proof}
By Theorem \ref{theor:stepan}, we have $\ord_u\subseteq \rrho_{i}$ for each $i$ such that $u$ is a prefix of $w_i$. Choose one such $w_i$ (which exists since $u$ is a cyclic factor of $w$). Let $\ord_u\subseteq \rrho_{j}$ and suppose that $za$ is a prefix of $u$ and $zb$ a prefix of $w_j$ for two distinct letters $a$ and $b$ and some $z\in \Sigma_{n}^{*}$. Then $a\rrho_i b$, and, since $a\in \alp(u)$, we deduce that $a\ord_u b$. This implies that $a\rrho_j b$, since $\ord_u\subseteq \rrho_j$. Therefore, $w_i\rrho_j w_j$, a contradiction.
\qed \end{proof}

Proposition \ref{minimal2} states that the cyclic factors of a ULW are in one-to-one correspondence with the orders they define.
As an example, if $1<2$ and, say, $212$ is a cyclic factor of a universal Lyndon word $w$, then every other occurrence of $21$ in $w$ must be followed by $2$.

\begin{corollary}\label{cor:minimal}
Let $w$ be a universal Lyndon word of degree $n$, and  $u$  a cyclic factor of $w$ of length $k>0$. Then $u$ is the lexicographically smallest cyclic factor of $w$ of length $k$ with respect to any total order $\rrho$ on $\Sigma_{n}$ such that $\ord_u\subseteq \rrho$.
\end{corollary}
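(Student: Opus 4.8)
The plan is to reduce the statement to Proposition \ref{minimal2} by a short argument about prefixes of conjugates. Fix a total order $\rrho$ on $\Sigma_{n}$ with $\ord_u\subseteq \rrho$. Since $u$ is a cyclic factor of $w$, there is a conjugate $w_i$ having $u$ as a prefix; by Proposition \ref{minimal2} this $w_i$ is exactly the conjugate with $\ord_u\subseteq\rrho_i$, and since $\rrho$ is a total order extending $\ord_u$, we may take $\rrho=\rrho_i$, i.e. $w_i$ is the Lyndon conjugate for $\rrho$. Now every cyclic factor of $w$ of length $k$ is a prefix of some conjugate $w_j$, and $w_i\rrho w_j$ for all $j\neq i$ because $w_i$ is Lyndon for $\rrho$ (here we use that $w$ is primitive, so all conjugates are distinct, hence the inequalities are strict). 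Comparing the length-$k$ prefixes of $w_i$ and $w_j$ then gives that $u$ is $\rrho$-smaller than, or equal to, the length-$k$ prefix of $w_j$; and since distinct conjugates of an unbordered word need not have distinct length-$k$ prefixes in general, I should argue the strictness separately.

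The one place needing care is precisely this last point: two different conjugates could share the same length-$k$ prefix, so from $w_i\rrho w_j$ one only gets that the length-$k$ prefix of $w_i$ is $\rrho$-smaller-or-equal to that of $w_j$. But that is exactly what the statement asserts — "lexicographically smallest", allowing ties — so no strictness is actually required, and the argument closes. If one wants the stronger claim that $u$ is the \emph{unique} smallest such factor, one would additionally invoke Proposition \ref{minimal2}: if $u'$ is another length-$k$ cyclic factor with $u'$ a prefix of $w_j$ and $u,u'$ having the same length-$k$ prefix is impossible unless $u=u'$, since by Theorem \ref{theor:stepan} the prefix determines $\rrho_j\supseteq\ord_{u'}$, and two length-$k$ factors that are $\rrho$-incomparable as prefixes must coincide letter by letter.

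So the key steps, in order, are: (1) produce the conjugate $w_i$ with prefix $u$ via the hypothesis that $u$ is a cyclic factor; (2) use Proposition \ref{minimal2} (or directly Theorem \ref{theor:stepan}) to identify $w_i$ as the Lyndon conjugate for any total order $\rrho\supseteq\ord_u$, in particular $\rrho_i$ can be taken to be $\rrho$; (3) observe that every length-$k$ cyclic factor of $w$ is the length-$k$ prefix of some conjugate $w_j$; (4) conclude from the Lyndon property of $w_i$ that its length-$k$ prefix $u$ is $\rrho$-minimal among these. I do not expect any step to present a serious obstacle — the corollary is essentially a repackaging of Proposition \ref{minimal2} together with the defining property of Lyndon words — the only subtlety being to state the conclusion with "smallest" understood in the non-strict sense, or else to supply the short uniqueness argument via Theorem \ref{theor:stepan} sketched above.
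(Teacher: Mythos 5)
Your proof is correct and follows the route the paper intends (the corollary is stated without proof as an immediate consequence of Proposition \ref{minimal2}): take the Lyndon conjugate $w_i$ of $w$ for $\rrho$, observe that $u$ is a prefix of $w_i$ by Proposition \ref{minimal2}, and compare length-$k$ prefixes of conjugates using the Lyndon property. Two cosmetic points: it is cleaner to fix $w_i$ as the Lyndon conjugate for $\rrho$ first and then apply Proposition \ref{minimal2} (rather than picking an arbitrary conjugate with prefix $u$ and arguing that its order ``may be taken'' to be $\rrho$, since several conjugates can have prefix $u$), and your worry about strictness is vacuous, because $\rrho$ induces a total order on words of length $k$, so a tie between two length-$k$ cyclic factors means they are the same word and ``smallest'' is automatically well defined.
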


We now give a combinatorial characterization of universal Lyndon words.

\begin{theorem}\label{theor:number}
 Let $w$ be a word over $\Sigma_{n}$. Then $w$ is a universal Lyndon word if and only if for every cyclic factor $u$ of $w$, one has
\begin{equation}\label{eq:form}
 \abs{w}^{c}_{u}=(n-\abs{\alp(u)})!
\end{equation}
\end{theorem}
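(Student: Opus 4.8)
The plan is to prove both implications of Theorem~\ref{theor:number} by working with the finite set of cyclic factors and counting occurrences, using Proposition~\ref{minimal2} as the main lever. First I would establish the forward direction: assuming $w$ is a ULW, I prove~\eqref{eq:form} by downward induction on $\abs{\alp(u)}$, or equivalently by induction on the length of $u$ combined with a tracking of which orders a given cyclic factor is compatible with. The base case is $\abs{\alp(u)}=n$: here $(n-n)!=1$, and a cyclic factor $u$ using all $n$ letters defines a total order $\ord_u$; by Proposition~\ref{minimal2} the occurrences of $u$ as a prefix of conjugates $w_i$ correspond exactly to the conjugates with $\ord_u\subseteq\rrho_i$, and since $\ord_u$ is already total this forces $\rrho_i=\ord_u$, so there is exactly one such conjugate, i.e. $\abs{w}^c_u=1$. (One must also note that a ULW is cyclically square-free, as observed in the remark, so no cyclic factor has "wraparound" multiplicity issues; each occurrence of $u$ as a cyclic factor of $w$ is an occurrence as a prefix of a unique conjugate $w_i$.)

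For the inductive step in the forward direction, suppose $u$ is a cyclic factor with $\abs{\alp(u)}=m<n$. Every occurrence of $u$ in the cyclic word $w$ extends by one letter to a cyclic factor $ux$ for some $x\in\Sigma_n$, and distinct occurrences of $u$ give occurrences of possibly-equal or possibly-distinct extensions $ux$. The key point, again via Proposition~\ref{minimal2}, is to identify exactly which letters $x$ occur as such an extension: an occurrence of $u$ sits inside a conjugate $w_i$ with $\ord_u\subseteq\rrho_i$, and the extending letter is the $(|u|+1)$-st letter of $w_i$. If $x\in\alp(u)$ then $ux$ still defines $\ord_{ux}=\ord_u$ and has $\abs{\alp(ux)}=m$; if $x\notin\alp(u)$ then $\ord_{ux}$ refines $\ord_u$ by placing $x$ immediately after all of $\alp(u)$, and $\abs{\alp(ux)}=m+1$. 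I would argue that \emph{every} letter of $\Sigma_n$ actually appears as an extension: for $x\in\alp(u)$ this needs a small argument (some conjugate compatible with $\ord_u$ must have $u$ followed by $x$ — this should follow because $u x$ for the appropriate unique $x\in\alp(u)$ dictated by the structure... actually more carefully, the sum over all extension letters of their cyclic-factor counts must equal $\abs{w}^c_u$), and for $x\notin\alp(u)$, the cyclic factor $ux$ defines a strictly larger partial alphabet order and hence (by the induction hypothesis, since $\abs{\alp(ux)}=m+1$) has $\abs{w}^c_{ux}=(n-m-1)!>0$. Then
\[
\abs{w}^c_u \;=\; \sum_{x\in\Sigma_n}\abs{w}^c_{ux}
\;=\; \sum_{x\in\alp(u)}\abs{w}^c_{ux} \;+\; \sum_{x\notin\alp(u)}\abs{w}^c_{ux}.
\]
The second sum has $n-m$ terms each equal to $(n-m-1)!$ by induction, contributing $(n-m)!\cdot\frac{n-m}{n-m}$... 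I should be careful here: $(n-m)\cdot(n-m-1)! = (n-m)!$. The first sum: for each $x\in\alp(u)$, $ux$ has the same alphabet-size $m$ as $u$, but is longer, so this is not directly covered by the same induction parameter; I would instead set up the induction on $\abs{w}^c_u$ downward, or phrase the statement as: among all cyclic factors extending $u$ by one letter, exactly one lies in $\alp(u)$ when $m<n$ — no, that is false for $n=4$. The cleaner route is a \emph{separate} downward induction on length: the longest cyclic factors (length $n!$) trivially satisfy $\abs{w}^c_u=1=(n-n)!$ when... no, the alphabet is all of $\Sigma_n$ but that is automatic only if $n!\ge$ something. The genuinely clean approach is induction on $n-\abs{\alp(u)}$ but handling the first sum by a secondary induction on length; since lengths are bounded, this terminates. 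I would present it as: for fixed $m=\abs{\alp(u)}$, do downward induction on $\abs u$, with the top case $\abs u = \abs w$ (whence $m=n$) and using $\abs{w}^c_u=\sum_x\abs{w}^c_{ux}$ where among the extensions, those $x\in\alp(u)$ have length $\abs u+1$ and alphabet size $m$ (covered by the inner induction), and those $x\notin\alp(u)$ have alphabet size $m+1$ (covered by the outer induction).

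For the converse, assume~\eqref{eq:form} holds for every cyclic factor $u$ of $w$. Taking $u$ of length $1$ (single letters) gives $\abs{w}^c_a = (n-1)!$ for each $a\in\Sigma_n$, so each letter occurs $(n-1)!$ times and $\abs w = n\cdot(n-1)! = n!$. Taking $u$ any permutation word of $n-1$ distinct letters gives $\abs{w}^c_u=(n-(n-1))!=1$; since there are $n!$ such words and their total number of occurrences is... I would check that every cyclic factor of length $n-1$ that is a permutation of $n-1$ letters occurs exactly once, and that these are the only cyclic factors of length $n-1$ occurring (by cyclic square-freeness forced by~\eqref{eq:form}: if a length-$(n-1)$ cyclic factor had a repeated letter, its alphabet size would be $\le n-2$, so it would occur $\ge 2!=2$ times, but I also need to rule these out entirely — actually I should just count: the total number of length-$(n-1)$ cyclic factors counted with multiplicity equals $\abs w = n!$, and if all the permutation ones occur exactly once, that already accounts for $n!$, leaving no room for any non-permutation cyclic factor of length $n-1$). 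Hence every permutation of $n-1$ elements of $\Sigma_n$ appears as a cyclic factor exactly once, and Proposition~\ref{prop:1} applies to conclude $w$ is a ULW.

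The main obstacle I anticipate is the bookkeeping in the forward direction around the extensions $ux$ with $x\in\alp(u)$: one must be sure that the induction is well-founded (hence the two-parameter induction on $(n-\abs{\alp(u)},\ \abs w - \abs u)$ ordered lexicographically), and one must correctly argue that every letter $x\in\Sigma_n$ genuinely occurs as a one-letter extension of every cyclic factor $u$ — in particular that the new letters $x\notin\alp(u)$ each occur, which is where Proposition~\ref{minimal2} and the positivity $(n-\abs{\alp(u)}-1)!\ge 1$ are used, and that the occurrences of the $\alp(u)$-extensions sum correctly. A subtlety worth flagging: one occurrence of $u$ might be followed by a letter already in $\alp(u)$ while another occurrence of the same $u$ is followed by a new letter, so the partition "$x\in\alp(u)$ vs. $x\notin\alp(u)$" is a partition of extension-\emph{letters}, not of occurrences, and the displayed sum decomposition above is the right way to organize it. Everything else is routine once this inductive skeleton is in place.
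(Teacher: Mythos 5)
Your proposal breaks down in both directions, and the more serious failure is the converse. You try to derive from \eqref{eq:form} that \emph{every} permutation of $n-1$ letters of $\Sigma_n$ occurs exactly once as a cyclic factor, so that Proposition~\ref{prop:1} can be invoked. That intermediate statement is false: the converse of Proposition~\ref{prop:1} does not hold, as the paper itself notes with the ULW $123431242314132421343214$ of degree $4$, which (being a ULW, hence satisfying \eqref{eq:form} by the forward implication) omits $142,143,241,243,341,342$ entirely, while non-permutation factors of length $n-1$ such as $343$ do occur in it (twice, consistently with \eqref{eq:form}). Your counting argument also begs the question internally: knowing that each permutation-type cyclic factor that \emph{does} occur has multiplicity $1$ does not tell you that all $n!$ permutation words occur, which is exactly what you would need to ``fill up'' the $n!$ positions. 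So the converse must be proved without Proposition~\ref{prop:1}; the paper does it directly, showing that each conjugate is Lyndon for the order it defines: if $z$ is the longest common prefix of a conjugate $w_i$ and another conjugate $v$, with next letters $a$ and $b$ respectively, then $\abs{w}^{c}_{z}>\abs{w}^{c}_{zb}$ forces $b\notin\alp(z)$ by \eqref{eq:form}, whence $a\ord_i b$ and $w_i\ord_i v$.

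The forward direction also contains a false step. You claim that for every cyclic factor $u$ with $\abs{\alp(u)}=m<n$ and every $x\notin\alp(u)$, the word $ux$ is a cyclic factor with $\abs{w}^{c}_{ux}=(n-m-1)!$. This fails already for $n=3$: in the ULW $w=212313$, the factor $u=21$ has $\abs{w}^{c}_{21}=1$, its only one-letter extension occurring in $w$ is $212$ (a letter of $\alp(u)$), and $213$ is not a cyclic factor at all; the hypothesis of the theorem says nothing about words that are not cyclic factors, so the induction hypothesis cannot be applied to $ux$. Moreover, if your second sum really contributed $(n-m)!$, the first sum would have to vanish, contradicting occurrences such as $212$ (or $121$ inside Jackson ULWs). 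The correct dichotomy, which follows from Proposition~\ref{minimal2}, is that either some (hence every) occurrence of $u$ is followed by one fixed letter of $\alp(u)$, or all extensions are by letters outside $\alp(u)$; but none of this machinery is needed. The argument you gave for your base case $\abs{\alp(u)}=n$ works verbatim for every $u$: by Proposition~\ref{minimal2} (equivalently Corollary~\ref{cor:minimal}) the cyclic occurrences of $u$ are exactly the conjugates $w_i$ with $\ord_u\subseteq\rrho_i$, and since the conjugates of a ULW are in bijection with the $n!$ total orders, their number is the number of total orders extending $\ord_u$, namely $(n-\abs{\alp(u)})!$. That one-line count is the paper's proof of the forward implication, and no induction on extensions is required.
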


\begin{proof}
Suppose that $w$ is a ULW. There are $(n-\abs{\alp(u)})!$ many total orders $\rrho$ on $\Sigma_{n}$ such that $\ord_u\subseteq \rrho$. Hence, (\ref{eq:form}) follows from Corollary \ref{cor:minimal}.

Suppose now that (\ref{eq:form}) holds for every cyclic factor $u$ of $w$ and let us prove that $w$ is a ULW. For every letter $a\in\Sigma_n$, one  has $\abs{w}_a=\abs{w}^c_a=(n-1)!\,$, so that
$\abs{w}=\sum_{a\in\Sigma_n}\abs{w}_a=n!$. Moreover, $w$ is primitive, since $\abs{w}^{c}_{w}=1$.
We show that $w$ is a Lyndon word with respect to $\ord_w$. Let $v$ be a proper conjugate of $w$ and let $z$ be the longest common prefix of $w$ and $v$. Let $a$ and $b$ be the letters that follow the prefix $z$ in $w$ and $v$ respectively.  Since both $za$ and $zb$ occur in $w$, we have $\abs{w}^{c}_{z}>\abs{w}^{c}_{zb}$ which implies $b\notin\alp(z)$ by (\ref{eq:form}). Because $za$ is a prefix of $w$ and $b\notin\alp(z)$, one has $a\ord_w b$, and therefore $w\ord_w v$. This proves that $w$ is a Lyndon word. By a similar argument, all conjugates of $w$ are Lyndon words, so that $w$ is a ULW.
\qed \end{proof}

\begin{corollary}
The reversal of a ULW is a ULW.
\end{corollary}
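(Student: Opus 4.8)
The plan is to derive this directly from the combinatorial characterization of Theorem~\ref{theor:number}. For a word $u=a_1a_2\cdots a_m$ write $\widetilde{u}=a_m\cdots a_2a_1$ for its reversal, and let $w$ be a ULW of degree $n$. Since ULWs are considered as cyclic words, it suffices to check that $\widetilde{w}$ satisfies condition~(\ref{eq:form}), i.e. that $\abs{\widetilde{w}}^{c}_{v}=(n-\abs{\alp(v)})!$ holds for every cyclic factor $v$ of $\widetilde{w}$.

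The one point to verify carefully is that reversal behaves well on cyclic occurrences. Since $\widetilde{w}\,\widetilde{w}=\widetilde{ww}$, a word $v$ with $\abs{v}\le\abs{w}$ is a cyclic factor of $\widetilde{w}$ if and only if $\widetilde{v}$ is a cyclic factor of $w$. Count a cyclic occurrence of $v$ in $\widetilde{w}$ as a starting position $j\in\{1,\ldots,\abs{w}\}$ such that reading $\abs{v}$ consecutive letters of $\widetilde{w}$ cyclically from position $j$ spells $v$; applying reversal turns such a window into a window of length $\abs{v}$ inside $ww$ that spells $\widetilde{v}$, and as $j$ ranges over $\abs{w}$ consecutive values the mirrored starting position in $ww$ also ranges over $\abs{w}$ consecutive values, hence over a complete set of residues modulo $\abs{w}$. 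Distinct cyclic occurrences therefore correspond to distinct cyclic occurrences, which gives $\abs{\widetilde{w}}^{c}_{v}=\abs{w}^{c}_{\widetilde{v}}$ for every cyclic factor $v$ of $\widetilde{w}$. This residue bookkeeping is the only place where a small argument is needed.

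Now the conclusion is immediate. Trivially $\alp(v)=\alp(\widetilde{v})$, and since $\widetilde{v}$ is a cyclic factor of the ULW $w$, Theorem~\ref{theor:number} yields $\abs{w}^{c}_{\widetilde{v}}=(n-\abs{\alp(\widetilde{v})})!=(n-\abs{\alp(v)})!$. Combining with the previous paragraph, $\abs{\widetilde{w}}^{c}_{v}=(n-\abs{\alp(v)})!$ for every cyclic factor $v$ of $\widetilde{w}$, so Theorem~\ref{theor:number}, applied in the other direction, shows that $\widetilde{w}$ is a ULW. It is worth \emph{stressing} why the detour through Theorem~\ref{theor:number} is needed: being a Lyndon word is not invariant under reversal (for instance $aabb$ is Lyndon for the order $a<b$ but its reversal $bbaa$ is Lyndon for no order), so one cannot simply reverse the Lyndonicity of each individual conjugate; the counting characterization is precisely what makes the argument go through painlessly. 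I do not expect any genuine obstacle beyond the elementary bookkeeping in the second paragraph.
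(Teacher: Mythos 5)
Your proof is correct and is essentially the paper's argument: the corollary is stated as an immediate consequence of Theorem~\ref{theor:number}, precisely because the characterization $\abs{w}^{c}_{u}=(n-\abs{\alp(u)})!$ is reversal-invariant, via $\abs{\widetilde{w}}^{c}_{v}=\abs{w}^{c}_{\widetilde{v}}$ and $\alp(v)=\alp(\widetilde{v})$, which is exactly your bookkeeping. One aside is wrong, however: $bbaa$ \emph{is} a Lyndon word, namely for the order $b<a$; a correct illustration that Lyndonness is not closed under reversal is the paper's own example $112212$, whose reversal $212211$ is not Lyndon for either order on $\{1,2\}$.
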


Note that the fact that the set of universal Lyndon words is closed under reversal is not an immediate consequence of the definition. This property is not true for Lyndon words, e.g. the word $112212$ is Lyndon but its reversal is not. 

\begin{definition}
We say that $u$ is a \emph{minimal order-defining word} if no proper factor of $u$ defines $\ord_{u}$.
\end{definition}

\begin{proposition}
Given a universal Lyndon word $w$ of degree $n$, for each partial alphabet order $\ord$ on $\Sigma_{n}$ there is a unique minimal order-defining word with respect to $\ord$ that is a cyclic factor of $w$.
\end{proposition}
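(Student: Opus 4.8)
The plan is to prove both existence and uniqueness of a minimal order-defining cyclic factor for each partial alphabet order $\ord$ with respect to some $I\subseteq\Sigma_n$. For existence, I would start from any cyclic factor $u$ of $w$ with $\ord_u=\ord$ and shrink it. The key tool is Proposition~\ref{minimal2}: the cyclic factors of $w$ are in bijection with the partial alphabet orders they define. So I first need to exhibit at least one cyclic factor defining $\ord$. If $\abs I=k$ and $\ord$ orders $a_1\ord a_2\ord\cdots\ord a_k$, consider any conjugate $w_i$ whose first $k$ letters have first occurrences in exactly the order $a_1,\dots,a_k$; such a conjugate exists because, extending $\ord$ to a total order $\rrho$ on $\Sigma_n$ and letting $w_i$ be the Lyndon conjugate for $\rrho$, Theorem~\ref{theor:stepan} gives $\ord_i=\rrho$, and then a suitable prefix $u$ of $w_i$ has $\ord_u=\ord$. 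Among all cyclic factors defining $\ord$, pick one of minimal length (equivalently, iteratively delete the last letter of $u$ as long as the first occurrence of every letter of $I$ is preserved and no letter order changes); by construction the resulting word is a minimal order-defining word that is a cyclic factor of $w$.

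For uniqueness, suppose $u$ and $u'$ are two minimal order-defining cyclic factors of $w$ with $\ord_u=\ord_{u'}=\ord$. By Proposition~\ref{minimal2}, $u$ is a prefix of $w_i$ exactly when $\ord_u\subseteq\rrho_i$, and the same holds for $u'$; hence $u$ and $u'$ are prefixes of exactly the same set of conjugates of $w$, namely those $w_i$ with $\ord\subseteq\rrho_i$. Therefore, for every such $w_i$, both $u$ and $u'$ are prefixes of $w_i$, so one of them is a prefix of the other. If $\abs u\le\abs{u'}$ this forces $u$ to be a prefix of $u'$. But then $u$ is a proper factor of $u'$ defining $\ord_{u'}$ (they define the same order $\ord$), contradicting minimality of $u'$ — unless $u=u'$. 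This gives uniqueness.

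The step I expect to be the main obstacle is making the existence argument fully rigorous, in particular verifying that the shrinking process terminates at a genuine \emph{minimal} order-defining word that is still a cyclic factor of $w$: deleting trailing letters keeps it a cyclic factor of $w$ automatically, and deleting letters can only shrink $\alp$ and never reorder first occurrences, so $\ord_u$ is controlled, but one must check the process can be pushed until no proper factor (not merely no proper \emph{prefix}) defines $\ord$. Here the cleanest route is to invoke minimality via uniqueness: any factor $v$ of a cyclic factor of $w$ is itself a cyclic factor of $w$, so if some proper factor $v$ of $u$ satisfied $\ord_v=\ord_u$, then $v$ would be a strictly shorter cyclic factor defining $\ord$, and we could have chosen $v$ instead; taking $u$ of minimal length among all cyclic factors defining $\ord$ therefore already guarantees minimality as an order-defining word. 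With that observation the two halves fit together cleanly.
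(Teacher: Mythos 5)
Your proposal is correct and follows essentially the same route as the paper: existence comes from taking the Lyndon conjugate $w_i$ for an extension $\rrho$ of $\ord$ (via Theorem~\ref{theor:stepan}) and a suitable prefix of it, and uniqueness comes from Proposition~\ref{minimal2}, which forces any two minimal order-defining cyclic factors with the same order to be prefixes of the same conjugates and hence of one another. The only (harmless) difference is that you secure minimality by choosing a cyclic factor of minimal length defining $\ord$, whereas the paper asserts it directly for the shortest prefix of $w_i$ whose alphabet is $I$; your observation that every factor of a cyclic factor is again a cyclic factor makes this step cleanly rigorous.
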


\begin{proof}
Let $\ord$ be a partial alphabet order with respect to $I$.  Let $w_i$ be such that $\ord \subseteq \rrho_i$, and let $u$ be the shortest prefix of $w_i$ such that $\alp(u)=I$.  Note that $\ord_u=\ord$ by Theorem \ref{theor:stepan}.  Clearly, $u$ is a minimal order-defining word, and the uniqueness is a consequence of Proposition \ref{minimal2}.
\qed \end{proof}

Let $w$ be a universal Lyndon word. We let $\MT(w)$ denote the minimal total order-defining words of $w$, i.e., the set of cyclic factors of $w$ that are minimal order-defining words with respect to a total order on $\Sigma_{n}$. The next proposition is a direct consequence of the definitions and of the previous results.

\begin{proposition}\label{prop:mt}
 Let $w$ be a universal Lyndon word of degree $n$, and $u$ a cyclic factor of $w$. The following conditions are equivalent: 
\begin{enumerate}
 \item $u\in \MT(w)$;
 \item $\abs{\alp(u)}=n-1$, and $\abs{\alp(u')}<n-1$ for each proper prefix $u'$ of $u$;
 \item there exists a unique conjugate $w_{i}$ of $w$ such that $u$ is the shortest unrepeated prefix of $w_{i}$.
\end{enumerate}
\end{proposition}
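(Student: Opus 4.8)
The plan is to establish the equivalence of the three conditions by a cycle of implications, leaning heavily on Theorem~\ref{theor:stepan}, Proposition~\ref{minimal2}, and Proposition~\ref{prop:1}. Before doing so, I would record two elementary observations: first, for any cyclic factor $v$ of $w$ we have $\ord_{v}=\ord_{v'}$ whenever $v'$ is the shortest prefix of $v$ with $\alp(v')=\alp(v)$, since the defining order only depends on the sequence of first occurrences of letters; and second, a prefix $u'$ of a conjugate $w_{i}$ is \emph{unrepeated} (i.e. occurs exactly once as a cyclic factor of $w$) precisely when $|w|^{c}_{u'}=1$, which by Theorem~\ref{theor:number} is equivalent to $\abs{\alp(u')}=n-1$. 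This last equivalence is the real engine of the proof, so I would state it explicitly at the outset.

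For $(1)\Rightarrow(2)$: if $u\in\MT(w)$ then by definition $u$ is a cyclic factor that is a minimal order-defining word with respect to some total order $\rrho$ on $\Sigma_{n}$. Since $\ord_{u}=\rrho$ is a total order, $\ord_{u}$ has size $n-1$, hence $\abs{\alp(u)}=n-1$. Minimality means no proper factor of $u$ defines $\ord_{u}$; in particular no proper prefix $u'$ of $u$ has $\ord_{u'}=\ord_{u}$. If some proper prefix $u'$ had $\abs{\alp(u')}=n-1$ too, then $\ord_{u'}$ would be a total order contained in $\ord_{u}$ (the first-occurrence order of a prefix refines into that of the whole word), forcing $\ord_{u'}=\ord_{u}$ — contradicting minimality. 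So $\abs{\alp(u')}<n-1$ for every proper prefix, giving $(2)$.

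For $(2)\Rightarrow(3)$: assume $\abs{\alp(u)}=n-1$ and $\abs{\alp(u')}<n-1$ for all proper prefixes $u'$. By Theorem~\ref{theor:number}, $|w|^{c}_{u}=(n-\abs{\alp(u)})!=1!=1$, so $u$ occurs exactly once as a cyclic factor; let $w_{i}$ be the (unique) conjugate having $u$ as a prefix. Every proper prefix $u'$ of $u$ satisfies $|w|^{c}_{u'}=(n-\abs{\alp(u')})!\geq 2!>1$, so $u$ is genuinely the \emph{shortest} unrepeated prefix of $w_{i}$, and this $w_{i}$ is unique because $u$ occurs only once; this is $(3)$. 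For $(3)\Rightarrow(1)$: let $u$ be the shortest unrepeated prefix of the unique conjugate $w_{i}$. Then $|w|^{c}_{u}=1$, so $\abs{\alp(u)}=n-1$ by Theorem~\ref{theor:number}, hence $\ord_{u}$ is a total order; and since $u$ is a prefix of $w_{i}$, Theorem~\ref{theor:stepan} gives $\ord_{u}=\ord_{i}=\rrho_{i}$. Minimality of $u$ as an order-defining word follows from Proposition~\ref{minimal2}: any proper factor $v$ of $u$ with $\ord_{v}=\ord_{u}$ would have $\alp(v)=\alp(u)$ of size $n-1$, hence $|w|^{c}_{v}=1$; by Proposition~\ref{minimal2} (or Corollary~\ref{cor:minimal}) $v$ would then be a prefix of $w_{i}$, forcing $v=u$ since $v$ is unrepeated and already $u$ is the shortest unrepeated prefix — contradiction unless $v=u$. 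Thus $u\in\MT(w)$.

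The main obstacle, and the step I would write most carefully, is the bookkeeping around "$\ord_{u'}$ is contained in $\ord_{u}$ for prefixes" and the symmetric use of Proposition~\ref{minimal2} to push a short order-defining factor back to being a prefix of $w_{i}$; these are where the partial-order refinement structure and the one-to-one correspondence between cyclic factors and defined orders are doing the work. Everything else is a direct count via Theorem~\ref{theor:number}. I would also remark that $(2)$ and $(3)$ make no reference to orders at all, which is the practical payoff: it reduces recognition of $\MT(w)$ to a purely combinatorial alphabet-growth condition on prefixes of conjugates.
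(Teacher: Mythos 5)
Your proof is correct and follows exactly the route the paper has in mind: the paper states Proposition~\ref{prop:mt} without an explicit proof, as a direct consequence of Theorem~\ref{theor:stepan}, Proposition~\ref{minimal2}, Corollary~\ref{cor:minimal} and Theorem~\ref{theor:number}, which are precisely the ingredients you assemble. The only spot worth tightening is the jump from $\abs{w}^{c}_{u}=1$ (respectively, from $\ord_u$ being a total order) to $\abs{\alp(u)}=n-1$, since $(n-\abs{\alp(u)})!=1$ also allows $\abs{\alp(u)}=n$; that case is excluded in one line because every proper prefix of a shortest unrepeated prefix is repeated and hence uses at most $n-2$ letters, so $u$ uses at most $n-1$ (and, in the implication from minimality, because a factor using all $n$ letters has a proper prefix on $n-1$ letters defining the same total order).
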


The shortest unrepeated prefix of a word is also called its \emph{initial box} \cite{CaDel01}. 

\medskip

In what follows, we exhibit a structural property of ULW.

\begin{definition}
We say that a cyclic factor $v$ of a word $w$  is a \emph{stretch} if $w$ contains a cyclic factor $avb$ with $a,b\in \Sigma_{n}\setminus \alp(v)$. Let $u$ be a cyclic factor of $w$. We say that a cyclic factor $v$ of $w$ is a \emph{stretch extension} of $u$ in $w$ if $u$ is a factor of $v$, $\alp(u)=\alp(v)$, and $v$ is a stretch.
\end{definition}

Of course, a stretch is always a stretch extension of itself.

\begin{example}
  Let $w=123412431324134214231432$. Then $31$ has two stretch extensions in $w$, namely $313$ and itself.
\end{example}

\begin{lemma}\label{stretch}
Each cyclic factor $u$ of a ULW $w$ has a unique stretch extension in $w$. Moreover, it has a unique occurrence in its stretch extension.
\end{lemma}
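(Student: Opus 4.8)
The plan is to prove both claims --- existence and uniqueness of the stretch extension, and uniqueness of the occurrence of $u$ within it --- by exploiting the counting formula of Theorem~\ref{theor:number}. First I would fix a cyclic factor $u$ with $\abs{\alp(u)}=k$, so that $\abs{w}^c_u=(n-k)!$ by~\eqref{eq:form}. The key observation is that every occurrence of $u$ in $ww$ can be extended to the left and to the right until a new letter (one not in $\alp(u)$) is read on each side; the result is a cyclic factor $avb$ with $a,b\notin\alp(v)$ and $\alp(v)=\alp(u)$, i.e. a stretch extension of $u$. This extension is forced: once we know the content $\alp(u)$, the left and right extensions are uniquely determined by $w$ as soon as we demand that $a$ and $b$ be fresh letters (we simply keep reading letters until the first fresh one appears on each side). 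So the real content is that \emph{different occurrences of $u$ yield the same stretch extension}, and that $u$ occurs only once inside it.

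The main step is a counting argument. Let $v$ be any stretch extension of $u$ in $w$, say realized by the cyclic factor $avb$. Since $\alp(v)=\alp(u)=I$ with $\abs I=k$ and $a,b\notin I$, we have $\abs{\alp(av)}=\abs{\alp(vb)}=k+1$, and $\abs{\alp(avb)}$ is either $k+1$ (if $a=b$, impossible since then $a$ would be a border letter creating a square-like repetition --- actually $a=b$ is allowed only if the two fresh letters coincide) or $k+2$. By Corollary~\ref{cor:minimal} and Proposition~\ref{minimal2}, the cyclic factors $av$ and $vb$ of $w$ are each determined by the order they define, and their counts are $\abs{w}^c_{av}=(n-k-1)!$ and $\abs{w}^c_{vb}=(n-k-1)!$. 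Now I count occurrences of $v$: each occurrence of $v$ in $ww$ is preceded by some letter and followed by some letter. An occurrence preceded by a letter of $I$ contributes to $\abs{w}^c_{cv}$ for some $c\in I$ with $\alp(cv)=I$, but by Proposition~\ref{minimal2} all occurrences of $cv$ (which has the same alphabet as $v$, hence defines $\ord_v$) are in bijection with occurrences of $v$ itself --- so I need to be careful and instead track how the count $(n-k)!=\abs{w}^c_u$ splits according to the alphabet of the maximal extension. Concretely: partition the $(n-k)!$ cyclic occurrences of $u$ according to which stretch extension they lie in; each stretch extension $v$ (with $avb$, fresh letters $a,b$) ``captures'' exactly those occurrences of $u$ that sit inside that particular window. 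I will show this window is the \emph{same} for all occurrences by arguing that the left-extension-to-a-fresh-letter map and right-extension map are well-defined functions of $u$ (not of the occurrence), using Proposition~\ref{minimal2}: if $u$ is a prefix of $w_i$ then $\ord_u\subseteq\rrho_i$, and conversely; the letter immediately after $u$ in $w_i$ depends only on $\rrho_i$ restricted to $\alp(u)\cup\{$next letter$\}$, and repeating, the whole right extension up to the first fresh letter is determined by $\ord_u$ alone. Hence all occurrences of $u$ have identical left and right extensions up to the first fresh letter on each side --- this gives both existence and uniqueness of the stretch extension.

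For the ``unique occurrence of $u$ in its stretch extension'' part, suppose $u$ occurred twice in its stretch extension $v$. Then $v$ would contain $u x u$ for some word $x$ with $\alp(x)\subseteq\alp(u)$ (since $\alp(v)=\alp(u)$), with the two displayed occurrences of $u$ distinct. But then $w$ contains the cyclic factor $u x u$, and by the previous paragraph the right extension of the first $u$ to its first fresh letter would have to coincide with the right extension of the second $u$; since the first $u$'s extension runs at least through $xu$ and then to a fresh letter, while the second $u$'s extension is a proper suffix-shift of it, we would get that $w$ restricted to a window of length $\abs{v}$ is both a prefix and a (shifted) factor in a way forcing periodicity --- more cleanly, $uxu$ with $\alp(x)\subseteq\alp(u)$ means $uxu$ is a cyclic factor whose alphabet is $\alp(u)$, so its "first fresh letter" extensions on the left and right would make $v$ not minimal, or would directly contradict $\abs{w}^c_{u}=(n-k)!$ by producing an over-count. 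The cleanest route: the number of occurrences of $u$ equals $\sum_v (\text{occurrences of }u\text{ inside }v)$ over distinct stretch extensions $v$; if this inner count is $\geq 2$ for some $v$, I will derive a contradiction with~\eqref{eq:form} applied to $uxu$ (whose alphabet is still $\alp(u)$, forcing $\abs{w}^c_{uxu}=(n-k)!=\abs{w}^c_u$, so every occurrence of $u$ is simultaneously a left occurrence and a right occurrence of a $uxu$-block, impossible for the extremal occurrences).

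I expect the main obstacle to be making the ``left/right extension depends only on $u$, not on the occurrence'' argument fully rigorous and circular-reference-free: it has to be bootstrapped letter by letter from Proposition~\ref{minimal2}, being careful that at each step the one-letter-longer prefix still has its alphabet equal to $\alp(u)$ (so Proposition~\ref{minimal2} still applies with the same $\ord_u$) until the very step where a fresh letter is read, at which point the fresh letter itself is also forced because $\rrho_i$ restricted to $\alp(u)$ plus that fresh letter is determined --- here one uses that among all $\rrho_i\supseteq\ord_u$ the identity of the minimal (w.r.t.\ $\rrho_i$) fresh letter is pinned down by Corollary~\ref{cor:minimal}. Packaging this induction cleanly, rather than the counting bookkeeping, is where the care is needed.
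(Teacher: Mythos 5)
Your instinct that Theorem~\ref{theor:number} should carry the whole proof is the same as the paper's: its entire argument is the one-line observation that $u$ and any stretch extension $v$ satisfy $\alp(u)=\alp(v)$, hence $\abs{w}^{c}_{u}=\abs{w}^{c}_{v}=(n-\abs{\alp(u)})!$, and this count equality at once forces every cyclic occurrence of $u$ to sit (at a fixed offset) inside an occurrence of $v$ and forces $u$ to occur only once in $v$. Your execution, however, has concrete gaps. First, the claim that the \emph{left} extension up to the first fresh letter is determined by $\ord_u$ alone cannot be bootstrapped from Proposition~\ref{minimal2} the way you do it on the right: appending a letter $a\in\alp(u)$ leaves the defined order unchanged ($\ord_{ua}=\ord_u$), but prepending a letter does not ($\ord_{cu}\neq\ord_u$ in general, e.g.\ $u=12$, $c=2$), so the letter-by-letter induction only works rightward. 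The left side needs the counting formula directly ($c\in\alp(u)$ gives $\abs{w}^{c}_{cu}=\abs{w}^{c}_{u}$, so every occurrence of $u$ is preceded by $c$) or the closure of ULWs under reversal. Second, your side claim that the fresh letter itself is ``pinned down by Corollary~\ref{cor:minimal}'' is false: in a Jackson ULW the two occurrences of a permutation of $n-2$ letters are followed by the two \emph{different} missing letters. This is harmless only because the fresh letters are not part of the stretch extension, but it signals that the order-theoretic route does not control what happens at the fresh-letter step.

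Third, and most seriously, the punchline of your unique-occurrence argument---that every occurrence of $u$ being both a left and a right copy of a $uxu$-block is ``impossible for the extremal occurrences''---fails, because occurrences are counted cyclically and a circular word has no extremal occurrence: in a periodic circular word of the form $(ux)^m$ every occurrence of $u$ is simultaneously a left and a right copy. The correct finish from $\abs{w}^{c}_{uxu}=\abs{w}^{c}_{u}$ is that every occurrence of $u$ is followed by $xu$; iterating, the cyclic word is periodic with period $\abs{ux}<n!$, so $w$ is conjugate to a proper power, contradicting the primitivity (equivalently, the cyclic square-freeness) of a ULW. You gesture at periodicity earlier but then discard it for the invalid extremal-occurrence shortcut. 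With these repairs your argument goes through, but it is substantially longer than the paper's direct comparison of $\abs{w}^{c}_{u}$ with $\abs{w}^{c}_{v}$.
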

\begin{proof}
Let $v$ be a stretch extension of $u$ in $w$. Then $u$ and $v$ have the same number of cyclic occurrences in $w$ by Theorem \ref{theor:number}.
\qed \end{proof}

\begin{theorem}
If $asa$ is a cyclic factor of a ULW $w$, with $a\in \Sigma_{n}\setminus \alp(s)$, then $bsb$ is a cyclic factor of $w$ for each $b\in \Sigma_{n}\setminus \alp(s)$.
\end{theorem}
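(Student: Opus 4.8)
The plan is to count cyclic occurrences using Theorem~\ref{theor:number} and the stretch machinery of Lemma~\ref{stretch}. Suppose $asa$ is a cyclic factor of $w$ with $a\in\Sigma_n\setminus\alp(s)$; write $k=\abs{\alp(s)}$, so $\abs{\alp(s)}=k$ and $\abs{\alp(asa)}=k+1$. By Theorem~\ref{theor:number}, $\abs{w}^c_{s}=(n-k)!$ while $\abs{w}^c_{asa}=(n-k-1)!$. The core idea is that the word $s$, viewed cyclically in $w$, must be ``surrounded'' by letters outside $\alp(s)$ in a very constrained way, and a simple counting argument over the $n-k$ letters of $\Sigma_n\setminus\alp(s)$ forces the symmetric configuration $bsb$ to appear for every such $b$.

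\textbf{Key steps.}
First I would analyze the occurrences of $s$ as a cyclic factor. Each of the $(n-k)!$ cyclic occurrences of $s$ in $w$ is immediately preceded by some letter and immediately followed by some letter. I claim the preceding letter always lies in $\Sigma_n\setminus\alp(s)$: indeed, if $s=cs'$ with $c\in\alp(s)$, then an occurrence of $s$ preceded by $c$ would give a cyclic factor $cs=ccs'$ — but this is impossible, and more carefully, by Lemma~\ref{stretch} the cyclic factor $s$ has a unique stretch extension, and comparing occurrence counts of $s$ with those of its one-letter extensions forces the relevant surrounding letters to be new. Concretely: for a letter $c\notin\alp(s)$, $sc$ is a cyclic factor iff $\abs{w}^c_{sc}>0$, and since $\sum$ of $\abs{w}^c_{sc}$ over all letters $c$ equals $\abs{w}^c_s=(n-k)!$, while $\abs{w}^c_{sc}=(n-k-1)!$ for each $c\notin\alp(s)$ that does occur, at most $n-k$ letters can follow $s$ and all of them must be outside $\alp(s)$; since $(n-k)\cdot(n-k-1)! = (n-k)!$ exactly, \emph{every} letter of $\Sigma_n\setminus\alp(s)$ follows $s$ exactly $(n-k-1)!$ times, and symmetrically precedes $s$ exactly $(n-k-1)!$ times.

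\textbf{Finishing the argument.}
Now fix $b\in\Sigma_n\setminus\alp(s)$. By the previous paragraph, $sb$ occurs cyclically $(n-k-1)!$ times and $bs$ occurs cyclically $(n-k-1)!$ times. Consider the $(n-k-1)!$ cyclic occurrences of $bs$: by the same counting argument applied to $bs$ (which has $\abs{\alp(bs)}=k+1$, hence $\abs{w}^c_{bs}=(n-k-1)!$ and $\abs{w}^c_{bsc'}=(n-k-2)!$ for the letters $c'$ that follow it), the letter following $bs$ ranges over exactly the $n-k-1$ letters of $\Sigma_n\setminus\alp(bs)=\Sigma_n\setminus(\alp(s)\cup\{b\})$, each $(n-k-2)!$ times — \emph{unless} $n-k=1$, in which case there is nothing to extend and the single occurrence of $bs$ is the whole cyclic word. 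Wait: this only tells us $bs$ is followed by letters other than $b$; to get $bsb$ I should instead run the symmetric count from the other side. The clean route: $asa$ being a cyclic factor means $sa$ is followed by $a$; but $sa$ has alphabet $\alp(s)\cup\{a\}$ of size $k+1$, and $sa$ occurs cyclically $(n-k-1)!$ times, each followed by a letter of $\Sigma_n\setminus\alp(sa)$ — and $a\in\Sigma_n\setminus\alp(s)$ but $a\notin\Sigma_n\setminus\alp(sa)$, contradiction, \emph{unless} we reconcile this by noting $\abs{w}^c_{sa}$ might exceed $(n-k-1)!$. I expect the main obstacle is exactly pinning down these occurrence counts correctly: the right statement is that for a stretch $v$, $v$ is preceded and followed by the \emph{same} multiset of new letters, and the cleanest proof applies Theorem~\ref{theor:number} to $asa$ and to its ``stretch'' structure, using Lemma~\ref{stretch} to identify stretch extensions, then runs a bijection between occurrences of $asa$ and occurrences of $bsb$ by a relabeling argument — or, most simply, observes that $\abs{w}^c_{asa}=(n-k-1)!>0$ for \emph{every} $a\notin\alp(s)$ follows directly once we know each new letter both precedes and follows $s$ exactly $(n-k-1)!$ times and these preceding/following choices are ``independent'' in the sense that the pair (preceder, follower) of $s$ is forced to be equal whenever $s$ is a stretch, which is the content of Lemma~\ref{stretch} and Corollary~\ref{cor:minimal}. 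So the plan is: (1) establish each $b\notin\alp(s)$ precedes $s$ exactly $(n-k-1)!$ times and follows $s$ exactly $(n-k-1)!$ times; (2) show that in each occurrence where $b$ precedes $s$, the following letter is \emph{also} $b$, by applying Corollary~\ref{cor:minimal} to the stretch $bsb'$ (the unique stretch extension of $bs$ restricted appropriately) together with the given occurrence $asa$ to force the order-defining word to match; (3) conclude $bsb$ is a cyclic factor. The delicate point in step (2) — reconciling the counts so that the preceder and follower coincide — is where I expect to spend the real effort.
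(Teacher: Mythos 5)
Your step (1) (each letter of $\Sigma_n\setminus\alp(s)$ precedes and follows $s$ exactly $(n-\abs{\alp(s)}-1)!$ times) is salvageable, but not by the reasons you give. Your first justification (``$cs=ccs'$ is impossible'') only applies when $c$ happens to be the first letter of $s$, and the bare count does not yield ``all of them must be outside $\alp(s)$'': the correct argument is that for $c\in\alp(s)$ the word $cs$ (or $sc$) has the same alphabet as $s$, so by Theorem \ref{theor:number} it would have the \emph{same} number of cyclic occurrences as $s$, forcing \emph{every} occurrence of $s$ to be preceded (followed) by $c$, which contradicts the occurrence of $s$ inside the hypothesised factor $asa$. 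Without that hypothesis the claim is simply false --- in the ULW $212313$ the factor $31$ is only ever followed by $3$ --- and your own puzzlement about $sa$ (every occurrence of $sa$ is in fact preceded by $a$) is exactly this phenomenon, not something that needs to be reconciled away.

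The genuine gap is your step (2), matching the preceder with the follower, and the proposal contains no argument for it. The principle you invoke --- that the preceding and following letters of a stretch must coincide, and that this is ``the content of Lemma \ref{stretch} and Corollary \ref{cor:minimal}'' --- is false: in $212313$ the letter $3$ is a stretch whose two cyclic occurrences appear as $231$ and $132$, never with equal neighbours; Lemma \ref{stretch} only asserts uniqueness of the stretch extension. Nor can the marginal counts of step (1) force diagonal pairs: with three outside letters $a,b,c$ and $(n-\abs{\alp(s)}-1)!=2$, pair statistics $(a,a)$ twice, $(b,c)$ twice, $(c,b)$ twice satisfy all your marginals yet contain no $bsb$ or $csc$, so pure counting cannot finish. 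The paper closes exactly this gap by induction on $\abs{s}$ together with the order machinery: every conjugate is Lyndon for the order it defines (Theorem \ref{theor:stepan}), and cyclic factors are in bijection with the partial orders they define (Proposition \ref{minimal2}); one takes the conjugate $w_j$ with $\ord_{bsa}\subseteq\rrho_j$, shows via Lemma \ref{stretch} that $bsa$ cannot be a prefix of $w_j$, and analyses the first disagreement, the induction hypothesis excluding a premature repetition $bs'b$ with $s'$ a proper prefix of $s$; only after this does a counting step of the kind you propose (via Theorem \ref{theor:number}) conclude the proof. Your proposal is missing precisely this inductive, order-theoretic ingredient, which is the heart of the theorem.
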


\begin{proof}
Proceed by induction on $\abs{s}$. The claim trivially holds for $\abs{s}=0$, since $aa$ is not a cyclic factor of $w$. Let now $\abs{s}>0$.
We first show that if $bs$ is a cyclic factor of $w$, then also $bsb$ is a cyclic factor of $w$.  Let therefore $bs$ be a cyclic factor of $w$, where $b\neq a$ is a letter, and let $j$ be such that $\ord_{bsa}\subseteq \rrho_j$.
By Lemma \ref{stretch}, the word $bsa$ 
is not a prefix of $w_j$. Let therefore $bs'e$ be a prefix of $w_j$ and $bs'f$ a prefix of $bsa$ where $e$ and $f$ are distinct letters. Suppose first that $e=b$. If $s'=s$, then $bsb$ is a cyclic factor of $w$ as required. If, on the other hand, the word $s'$ is a proper prefix of $s$, then the induction assumption for the word $bs'b$ implies that $as'a$ is a cyclic factor of $w$. This is a contradiction with Proposition \ref{minimal2} since $\ord_{as'a}\subseteq \ord_{asa}$.
Let now $e\neq b$. Note that then $\ord_{s'e}\subseteq \ord_{sa}$ since $\ord_{bsa}\subseteq \rrho_j$. But we have also $\ord_{s'f}\subseteq \ord_{sa}$, a contradiction with Proposition \ref{minimal2}.

The proof is concluded by a counting argument. Theorem \ref{theor:number} implies that, for any $b\notin\alp(s)$, the word $s$ has $m$ times more cyclic occurrences in $w$ than $bsb$, where $m$ is the cardinality of $\Sigma_{n}\setminus \alp(s)$.
\qed \end{proof}

The previous result shows the combinatorial structure of universal Lyndon words. 
Note that the factors of the form $asa$,  $a\in \Sigma_{n}\setminus \alp(s)$, with $\abs{\alp(s)}< n-2$, only appear in non-Jackson universal Lyndon words. In fact, they can be viewed as premature repetitions of the letter $a$. 

\section{Universal Lyndon Words and Lex-codes}\label{sec:codes}

Proposition \ref{prop:1} implies that an Eulerian cycle in a Jackson graph is a universal Lyndon word. However, there exist universal Lyndon words that do not arise from a Jackson graph,
as we showed at the end of Section \ref{sec:ulw}.

The  non-Jackson universal Lyndon words of degree $4$ are presented in Table \ref{tab:nonjackson4} (the list contains only pairwise non-isomorphic words, in their representation starting with $2123$).

 \begin{table}[h]
\begin{center}
  \setlength{\extrarowheight}{2pt}

\begin{tabular}{| >{\centering\arraybackslash}m{5cm} | >{\centering\arraybackslash}m{5cm} |}

    \hline
		& \\[-0.7em]
 $212313243134212414234143$ &    $212313241432124313414234$        \\
 $212313241423414321243134$ &    $212313241432124313423414$        \\
 $212313421243132414234143$  &    $212313414234212431324143$         \\
 $212313241421243134234143$  &    $212341423132414321243134$         \\
 $212313241423414313421243$  &    $212313212414324313414234$       \\
 $212313243134142342124143$  &    $212313212414324313423414$  \\
 $212313212432414234143134$  &    $212313414234212414313243$         \\
 $212313212414234143243134$  &    $212313212432414313423414$          \\
 $212313212431342341432414$  &    $212313243212414313423414$         \\
 $212313241431342341421243$ &    $212313212432414313414234$         \\
  $212313212431341432414234$ &        \\[-0.7em]
	& \\
 \hline
  \end{tabular}\vspace{4mm}
\end{center}
 \caption{The 21 non-Jackson universal Lyndon words of degree $4$, up to isomorphisms.\label{tab:nonjackson4}}
\end{table}

We now exhibit a method for constructing all the universal Lyndon words. This method is based on particular prefix codes, whose definition is given below.

\begin{definition}
 A set $X\subseteq\Sigma_{n}^*$ is a \emph{lex-code} of degree $n$ if:
 \begin{enumerate}
\item for any $x\in X$, there exists a unique ordering of $\Sigma_n$ such that $x$ is the lexicographical minimum of $X$;
\item if $u$ is a proper prefix of some word of $X$, then $u$ is a prefix of at least two distinct words of $X$.
\end{enumerate}
A lex-code $X$ of degree $n$  is  \emph{Hamiltonian} if the relation
\begin{equation*}\label{eqHG}
S_X=\{(x,y)\in X\times X\mid\exists a\in\Sigma, x \mbox{ is a prefix of } ay\}
\end{equation*}
has a Hamiltonian digraph.
\end{definition}

Notice that Condition 1 in the previous definition ensures that a lex-code is a prefix code.

The following theorem shows the relationships  between Hamiltonian lex-codes and universal Lyndon words. 

\begin{theorem}\label{lexcode}
Let $w$ be a ULW.
Then the set $\MT(w)$
is a Hamiltonian lex-code.
Conversely, if $X\subseteq\Sigma_{n}^*$ is a Hamiltonian lex-code,
then there exists a ULW $w$ such that $X=\MT(w)$.
\end{theorem}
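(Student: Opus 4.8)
The plan is to prove the two directions separately. For the forward direction, let $w$ be a ULW of degree $n$ and set $X=\MT(w)$. I would first check that $X$ is a lex-code. Condition 2 of the definition (every proper prefix of a word of $X$ is a prefix of at least two words of $X$) follows from Proposition \ref{prop:mt}(2): if $u'$ is a proper prefix of some $u\in X$, then $\abs{\alp(u')}<n-1$, so the stretch extension / counting machinery (or directly Theorem \ref{theor:number}) shows that $u'$ occurs cyclically more than once in $w$, and one can extract from distinct cyclic occurrences distinct words of $\MT(w)$ having $u'$ as a prefix. For Condition 1, observe that each $u\in\MT(w)$ is a minimal order-defining word for a total order, so $\ord_u$ is one of the $n!$ total orders on $\Sigma_n$; by Corollary \ref{cor:minimal}, $u$ is the lexicographically smallest cyclic factor of $w$ of length $\abs u$ for that order, and since $\MT(w)$ is a prefix code all its words are pairwise non-comparable, which one uses to promote ``smallest among factors of its own length'' to ``smallest among all of $X$''; uniqueness of the order follows from the remark after the ULW definition (a conjugate, hence its initial box, cannot be Lyndon for two orders). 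Finally, Hamiltonicity: the relation $S_X$ is exactly the ``successor'' relation induced by reading $w$ cyclically — if $w_i$ has initial box $x\in\MT(w)$ and $w_{i+1}$ has initial box $y$, then $x$ is a prefix of $a y$ where $a=a_i$ is the first letter of $w_i$. Since every conjugate of $w$ has a unique initial box in $\MT(w)$ (Proposition \ref{prop:mt}(3)), the cyclic sequence of initial boxes of $w_1,w_2,\dots,w_{n!}$ visits each element of $X$ exactly once, giving a Hamiltonian cycle in the digraph of $S_X$.

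For the converse, let $X$ be a Hamiltonian lex-code of degree $n$; fix a Hamiltonian cycle $x^{(1)},x^{(2)},\dots,x^{(N)},x^{(1)}$ in the digraph of $S_X$, where $N=\abs X$. By definition of $S_X$, for each $k$ there is a letter $a_k\in\Sigma_n$ with $x^{(k)}$ a prefix of $a_k x^{(k+1)}$ (indices mod $N$). I would define $w$ to be the cyclic word $a_1 a_2\cdots a_N$ and then argue: (i) this is consistent, i.e. reading $w$ starting at position $k$ one recovers $x^{(k)}$ as a prefix — this is immediate from $x^{(k)}\in\Pref(a_k x^{(k+1)})=\Pref(a_k a_{k+1}\cdots)$, unwinding the relation; (ii) $\abs w=N=n!$ — this should come from Condition 1 (a bijection between $X$ and the $n!$ orders) plus the fact that the Hamiltonian cycle uses each vertex once; (iii) $w$ is a ULW. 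For (iii) the cleanest route is via Theorem \ref{theor:number}: I would show that for every cyclic factor $u$ of $w$, $\abs w^c_u=(n-\abs{\alp(u)})!$. The key is that Condition 2 forces the prefix tree of $X$ to be a ``full'' binary-branching tree on its internal nodes, so every proper prefix of a codeword has exactly the right number of completions; combined with the fact that the Hamiltonian cycle realizes $w$, each cyclic factor $u$ of $w$ is a prefix of exactly $(n-\abs{\alp(u)})!$ codewords and hence occurs that many times. Then $\MT(w)=X$ follows because by Proposition \ref{prop:mt}(3) the initial boxes of the conjugates of a ULW are exactly its minimal total order-defining words, and by construction these are the codewords $x^{(k)}$.

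The main obstacle I anticipate is in the converse, pinning down precisely why Conditions 1 and 2 of a lex-code, together with Hamiltonicity, force the counting identity $\abs w^c_u=(n-\abs{\alp(u)})!$ for \emph{every} cyclic factor $u$, not just for the codewords and their prefixes. Codewords are ``long'' (their initial boxes have alphabet size $n-1$), so a generic cyclic factor $u$ of $w$ with $\abs{\alp(u)}$ small is a strict prefix of many codewords, and one needs Condition 2 to guarantee the branching is exactly binary at each letter-introduction and that no ``premature'' collapse happens — essentially that the structure built from $X$ behaves like (a possibly non-Jackson analogue of) the Jackson graph. I would handle this by an induction on $n-\abs{\alp(u)}$, peeling off one letter at a time: a factor $u$ with alphabet $I$ sits inside its unique stretch extension, and the two (or more) ways of extending correspond bijectively, via $S_X$ and Condition 2, to the orders refining $\ord_u$. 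Making that bijection airtight — in particular ruling out that two different conjugates of $w$ could share an initial box, which would break injectivity — is the delicate point, and Condition 1 (unique minimizing order for each codeword) is exactly what is needed there. The forward direction, by contrast, is essentially bookkeeping on top of Theorem \ref{theor:number}, Corollary \ref{cor:minimal}, and Proposition \ref{prop:mt}.
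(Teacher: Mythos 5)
Your forward direction is essentially the paper's proof: Condition~2 via Proposition~\ref{prop:mt} and Theorem~\ref{theor:number}, Condition~1 via Corollary~\ref{cor:minimal} and the bijections of Proposition~\ref{prop:mt} and Theorem~\ref{theor:stepan}, and Hamiltonicity from the successor relation $x_i$ being a prefix of $a_ix_{i+1}$ (the paper spells out why, via unrepeated prefixes of $w_ia_i$, but your claim is the same one). The converse also starts like the paper: read off the letters $a_k$ from the Hamiltonian cycle, set $w=a_1\cdots a_N$, and get $N=n!$ from Condition~1.

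The genuine gap is in how you then prove that $w$ is a ULW and that $\MT(w)=X$. Your plan is to verify the counting criterion $\abs{w}^{c}_{u}=(n-\abs{\alp(u)})!$ of Theorem~\ref{theor:number} for every cyclic factor $u$, and as you yourself note, this step is not carried out; moreover the ingredients you invoke do not suffice. Condition~2 only says that a proper prefix of a codeword is a prefix of \emph{at least} two codewords, so it gives no exact branching count (and branching can exceed two, e.g.\ at $\epsilon$ in $X=\{12,13,21,23,31,32\}$); a general cyclic factor of $w$ need not be a prefix of any codeword (it may start anywhere and extend past $x^{(k)}$); and the asserted bijection between codewords with prefix $u$ and total orders refining $\ord_u$ presupposes that the minimizing order of each codeword refines the order the codeword defines, which is not part of the lex-code definition and is essentially what is being proved. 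The paper avoids all of this: from $x_iu_i=a_ix_{i+1}$ one gets that $x_i$ is a prefix of a power of $w_i$, and since $X$ is a prefix code the words $x_i,x_j$ are incomparable, so for every total order $\ord$ one has $w_i\ord w_j$ iff $x_i\ord x_j$; Condition~1 then makes each $w_i$ the strict minimum of the conjugacy class for exactly one order, so $w$ is a ULW with no factor counting at all. Finally, your closing claim that $\MT(w)=X$ holds ``by construction'' is not justified: the construction only gives that $x^{(k)}$ is a prefix of $w_k^m$, not that it equals the shortest unrepeated prefix of $w_k$. The paper needs a separate argument here (if the initial box $h_i$ were a proper prefix of $x_i$, Condition~2 would force a second conjugate starting with $h_i$, contradicting Proposition~\ref{prop:mt}; if $x_i$ were a proper prefix of $h_i$, then $\abs{\alp(x_i)}<n-1$ and Theorem~\ref{theor:number} would produce a second occurrence of $x_i$, contradicting that $X$ is a prefix code), and some such argument is missing from your proposal.
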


\begin{proof}
We assume that $w$ is a ULW and show that $\MT(w)$ verifies the definition of lex-code. Since there is a bijection between the elements of $\MT(w)$, the conjugates of $w$ (Proposition \ref{prop:mt}) and the total orders on $\Sigma_{n}$ (Theorem \ref{theor:stepan}),  Condition 1 is a direct consequence of Corollary \ref{cor:minimal}.
Always from Proposition \ref{prop:mt}, any proper prefix $x'$ of a word $x$ in $\MT(w)$ contains less than $n-1$ distinct letters. From Theorem \ref{theor:number}, $x'$ has at least two occurrences as a cyclic factor of $w$. Therefore, there exist at least two distinct conjugates $w_{i}$ and $w_{j}$ of $w$ beginning with $x'$. Then $x'$ is a proper prefix of the shortest unrepeated prefixes of $w_{i}$ and $w_{j}$ respectively. By Proposition \ref{prop:mt}, we conclude that Condition 2 holds.

Now, we show that the lex-code $X$ is Hamiltonian. For every $0\leq i\leq n!-1$, let $a_{i}$ be the first letter of the conjugate $w_{i}$ of $w$.
Notice that for every $0\leq i\leq n!-2$ one has
$a_iw_{i+1}=w_ia_i$. By Proposition \ref{prop:mt}, every word in $\MT(w)$ is the shortest unrepeated prefix $x_{i}$ of a conjugate $w_{i}$.
As $x_{i+1}$ is an unrepeated prefix of $w_{i+1}$, the word $v=a_ix_{i+1}$ is an  unrepeated prefix of $a_iw_{i+1}=w_ia_i$.
Thus, either $v=w_ia_i$ or $v$ is an  unrepeated prefix of $w_i$.
In both cases, $x_i$ is a prefix of $v$ and therefore $(x_i,x_{i+1})\in S_X$.
Similarly, one has $(x_{n!-1},x_0)\in S_X$.
We conclude that
$(x_0,x_1,\ldots,x_{n!-1},x_0)$
is a Hamiltonian cycle in the digraph of $S_X$.

Conversely, we assume that $X$ is a Hamiltonian lex-code and show that $X=\MT(w)$ for a suitable ULW  $w$.
Let
$(x_0,x_1,\ldots,x_{k-1},x_0)$ 
be a Hamiltonian cycle in the digraph of $S_X$.
By Condition 1, one has $k=n!$ and $X$ is a prefix code.
Since $(x_i,x_{i+1})\in S_X$, $0\leq i<k$ (where $x_k=x_0$) one has
\begin{equation}\label{eq:SX}
x_iu_i=a_ix_{i+1}
\end{equation}
for suitable $a_i\in\Sigma_n,\ u_i\in\Sigma_n^*,\ 0\leq i<k$.

Set
$
w_i=a_i\cdots a_{k-1}a_0\cdots a_{i-1}$, 
$0\leq i<k$. 
By iterated application of (\ref{eq:SX}), one obtains that $x_i$ is a prefix of a power of ${w_i}$.
Now let $0\leq i,j<k$ and $i\neq j$.
For a sufficiently large $m$, $x_i,x_j$ are prefixes of $w_{i}^m,w_{j}^m$, respectively.
Thus, taking into account that $X$ is a prefix code,
for every total order $\ord$ on $\Sigma_{n}$,
one has ${w_i}\ord{w_j}$ if and only if $x_i\ord x_j$.
From this remark, in view of Condition 1, one derives that $w=w_0$ is a ULW.

To complete the proof, it is sufficient to show that $x_i$ is the shortest unrepeated prefix of $w_{i}$, $0\leq i<k$.
In fact, this implies that $X=\MT(w)$. Suppose that the shortest unrepeated prefix $h_{i}$ of $w_{i}$ is a proper prefix of $x_i$.
Then by Condition 2, $h_{i}$ is also prefix of $x_j$ and consequently of $w_j$, for some $j\neq i$.
But this contradicts Proposition \ref{prop:mt}. Thus $x_{i}$ is a prefix of $h_{i}$.
Now, suppose $x_i\neq h_{i}$. Since by Proposition \ref{prop:mt}, $h_{i}$ is a shortest word containing $n-1$ distinct letters, $\abs{\alp(x_{i})}<n-1$ and, by Theorem \ref{theor:number}, $x_{i}$ has at least another occurrence starting at a position $j\neq i$. So we have that the words $x_{i}$ and $x_{j}$ are one a prefix of the other, against the fact that $X$ is a prefix code.  
\qed
\end{proof}

From Theorem \ref{lexcode}, in order to produce a ULW, one can construct a lex-code and check whether it is Hamiltonian. Let $S_n$ be the set of the total orders on $\Sigma_n$.
All lex-codes of degree $n$  can be obtained by a construction based on iterated refinements of a partition
of $S_n$ as follows:

\begin{enumerate}
\item set $X=\{\epsilon\}$ and $C_\epsilon=S_n$;
\item repeat the following steps until $C_x$ is a singleton for all $x\in X$:
	\begin{enumerate}
	\item select $x\in X$ such that $C_x$ contains at least two elements;
	\item \label{stepx} choose $\Gamma\subseteq\Sigma_n$;
	\item for any $a\in\Gamma$, let $C_{xa}$ be the set of the orders of $C_x$ such that $a=\min\Gamma$;
	\item replace $X$ by $(X\setminus\{x\})\cup\{xa\mid a\in\Gamma,\ C_{xa}\neq\emptyset\}$.
	\end{enumerate}
\end{enumerate}

An example of execution of the previous algorithm is presented in Ex. \ref{ex:lexcode}.

One can verify that after each iteration of loop 2,
 $X$ is a prefix code,
 $(C_x)_{x\in X}$ is a partition of $S_n$,
 and any $x\in X$ is the lexicographic minimum of $X$ for all orders of $C_x$.
It follows that the procedure halts when $X$ is a lex-code.
Moreover, one can prove that any lex-code $X$ may be obtained by the procedure above,
choosing conveniently $\Gamma$ at step (b) of each iteration.

Clearly, not all lex-codes are Hamiltonian.
Thus, the main problem is to understand which limitations the Hamiltonianicity of the lex-code imposes to the construction above.
For example, the words in a lex-code can be arbitrarily long. But by Theorem \ref{lexcode}, if $X$ is a lex-code of degree $n$ and $u\in X$ is longer than $n!$, then $X$ cannot be Hamiltonian.

\begin{example}\label{ex:lexcode}
Let $n=3$. At the beginning of the algorithm, $X=\{\epsilon\}$ and $C_{\epsilon} = S_{3} = \{123, 132, 213, 231, 312, 321\}$. The first choice of a word $x$ in $X$ is forced, we must take $x=\epsilon$. Let us choose $\Gamma=\{1,2\}$. We then get $C_{1}=\{123,132,312\}$, $C_{2}=\{213,231,321\}$ and  $X$ becomes $\{1,2\}$. Let us now choose $x=1$ and $\Gamma=\{1,3\}$. We get $C_{11}=\{123,132\}$, $C_{13}=\{312\}$ and therefore $X=\{2,11,13\}$. Next, take $x=2$ and $\Gamma=\{2,3\}$; now $C_{22}=\{213,231\}$, $C_{23}=\{321\}$ and $X=\{11,13,22,23\}$. Then pick $x=11$ and $\Gamma=\{2,3\}$, so that $C_{112}=\{123\}$, $C_{113}=\{132\}$ and $X=\{13,22,23,112,113\}$. Finally, the last choice of a word in $X$ is forced, $x=22$ (since $C_{22}$ is the only set of cardinality greater than $1$ left). We choose $\Gamma=\{1,3\}$ and get $C_{221}=\{213\}$ and $C_{223}=\{231\}$. The lex-code obtained is thus $X=\{13,23,112,113,221,223\}$. The reader can verify that this lex-code is not Hamiltonian.

The following choices of $x$ and $\Gamma$ lead to the Hamiltonian lex-code $X=\{12,13,21,23,31,32\}$: $x=\epsilon$, $\Gamma=\{1,2,3\}$; $x=1$, $\Gamma=\{2,3\}$; $x=2$, $\Gamma=\{1,3\}$; $x=3$, $\Gamma=\{1,2\}$.
\end{example}

\section{Conclusion and Open Problems}

We introduced universal Lyndon words, which are words over an $n$-letter alphabet having $n!$ Lyndon conjugates. We showed that this class of words properly contains the class of  shorthand universal cycles for permutations. We gave combinatorial characterizations and constructions for universal Lyndon words. We leave open the problem of finding an explicit formula for the number of ULW of a given degree.

We exhibited an algorithm for constructing all the universal Lyndon words of a given degree. The algorithm is based on the search for a Hamiltonian cycle in a digraph defined by a particular code, called Hamiltonian lex-code, that we introduced in this paper. It would be natural to find efficient algorithms for generating (or even only counting) universal Lyndon words. 

Finally, universal Lyndon words have the property that every conjugate defines a different order, with respect to which it is Lyndon. We can define a \emph{universal order word} as a word of length $n!$ over $\Sigma_{n}$ such that every conjugate defines a different order. Universal Lyndon words are therefore universal order words, but the converse is not true, e.g. the word $123421323121424314324134$ is a universal order word but is not ULW. Thus, it would be interesting to investigate which properties of universal Lyndon words still hold for this more general class.


\end{document}